\documentclass[12pt]{article}

\usepackage{amsmath}
\usepackage{amssymb}
\usepackage{amsfonts}
\usepackage{latexsym}

\catcode `\@=11 \@addtoreset{equation}{section}

\catcode `\@=12



  \voffset1cm

\newcommand{\be}{\begin{equation}}
\newcommand{\en}{\end{equation}}
\newcommand{\bea}{\begin{eqnarray}}
\newcommand{\ena}{\end{eqnarray}}
\newcommand{\beano}{\begin{eqnarray*}}
\newcommand{\enano}{\end{eqnarray*}}
\newcommand{\bee}{\begin{enumerate}}
\newcommand{\ene}{\end{enumerate}}

\newcommand{\A}{{\mathfrak A}}

\newcommand{\Ao}{{\mathfrak A}_0}

\newcommand{\mc}{\mathcal}

\newcommand{\D}{{\mc D}}
\newcommand{\LL}{\mc L}

\newcommand{\Sc}{{\cal S}}
\newcommand{\E}{{\cal E}}
\newcommand{\F}{{\cal F}}

\newcommand{\Lc}{{\cal L}}
\newcommand{\C}{{\cal C}}
\newcommand{\1}{1 \!\! 1}

\newcommand{\LD}{{\LL}^\dagger (\D)}

\newcommand{\restr}{\upharpoonright}
\newcommand{\Hil}{\mc H}

\newtheorem{thm}{Theorem}
\newtheorem{cor}[thm]{Corollary}
\newtheorem{lemma}[thm]{Lemma}
\newtheorem{prop}[thm]{Proposition}

\newenvironment{proof}{\noindent {\bf Proof --}}{\hfill$\square$ \vspace{3mm}\endtrivlist}

\catcode `\@=11 \@addtoreset{equation}{section}
\catcode `\@=12

\textwidth17cm \textheight21cm

\hoffset-1.5cm \voffset-1cm

\begin{document}

\thispagestyle{empty}

\vspace*{1cm}

\begin{center}
{\Large \bf $O^\star$-algebras and quantum dynamics:
some existence results}   \vspace{2cm}\\

{\large F. Bagarello}
\vspace{3mm}\\
  Dipartimento di Metodi e Modelli Matematici,
Facolt\`a di Ingegneria, Universit\`a di Palermo, \\Viale delle Scienze, I-90128  Palermo, Italy\\
e-mail: bagarell@unipa.it\\home page: www.unipa.it/$^\sim$bagarell
\vspace{4mm}\\

\end{center}

\vspace*{2cm}

\begin{abstract}
\noindent We discuss the possibility of defining an algebraic
dynamics within the settings of $O^\star$-algebras. Compared with our previous results on this subject,
the main improvement here is that we are not assuming the existence of some hamiltonian for the {\em full} physical system. We will show that, under suitable conditions, the dynamics can still be defined via some limiting procedure starting from a given {\em regularized sequence}.
\end{abstract}

\newpage
\section{Introduction and mathematical framework}

In a series of previous papers, see \cite{bag1}-\cite{{bagtralp}}
and \cite{bagrev} for an up-to-date review, we have discussed
 the possibility of getting a rigorous
definition of the algebraic dynamics $\alpha^t$ of some physical system by
making use of the so-called quasi *-algebras, as well as some purely
mathematical features of these algebras. In particular we have shown
that, if the hamiltonian $H$ of the system exists and is a
self-adjoint operator, then we can use $H$ itself to build up a quasi
*-algebra of operators and a {\em physical topology} in terms of
which the time evolution of each observable of the system can be
defined rigorously and produces a new element of the same O*-algebra.
Therefore, it is the physical system, i.e. $H$ itself, which is used to
construct a natural algebraic and  topological framework. This
procedure, however, is based on the very strong assumption that $H$
exists, assumption which is quite often false in many $QM_\infty$
systems, i.e. in many quantum systems with infinite degrees of
freedom. It is enough to think to the mean field spin models, for
which only the finite volume hamiltonian $H_V$ makes sense, and no
limit of $H_V$ does exist at all in any reasonable topology. For this
reason we have also discussed, along the years, other possibilities for defining $\alpha^t$
which have produced some results and the feeling that other and more general statements could be
proved. This is indeed the main motivation of this paper: our final aim is to construct an algebraic framework whose definition is
completely independent from the physical system we want to describe
or, at least, independent up to a certain extent. For that we first recall the following very general settings for $QM_\infty$ systems, which is very well described in \cite{sew1,sew2} and which has been adapted to $O^*-$algebras recently, \cite{bit2,bagrev}. The full description of a
physical system $\Sc$ implies the knowledge of three basic
ingredients: the set of the observables, the set of the states and,
finally, the dynamics that describes the time evolution of the
system by means of the time dependence of the expectation value of a
given observable on a given state. Originally the set of the
observables was considered to be a C*-algebra, \cite{hk}. In many
applications, however, this was shown not to be the most convenient
choice and the C*-algebra was replaced by a von Neumann algebra,
because the role of the representations turns out to be crucial
mainly when long range interactions are involved. Here we use a
different algebraic structure: because of the relevance of the
unbounded operators in the description of $\Sc$, we will assume that
the observables of the system belong to a quasi *-algebra
$(\A,\Ao)$, see \cite{ctrev} and references therein. The set of
states over $(\A,\Ao)$, $\Sigma$, is described again in
\cite{ctrev}, while the dynamics is usually a group (or a semigroup)
of automorphisms of the algebra, $\alpha^t$. Therefore, following
\cite{sew1,sew2}, we simply write $\Sc=\{(\A,\Ao),\Sigma,\alpha^t\}$.

The system $\Sc$ is now {\em regularized}: we introduce some cutoff
$L$, (e.g. a volume or an occupation number cutoff), belonging to a
certain set $\Lambda$, so that $\Sc$ is replaced by a sequence or,
more generally, a net of systems $\Sc_L$, one for each value of
$L\in\Lambda$. This cutoff is chosen in such a way that all the
observables of $\Sc_L$ belong to a certain
*-algebra $\A_L$ contained in $\Ao$: $\A_L\subset\Ao\subset\A$. As
for the states, we choose $\Sigma_L=\Sigma$, that is, the set of
states over $\A_L$ is taken to coincide with the set of states over
$\A$. This is a common choice, \cite{bm}, even if also different
possibilities are considered in literature. For instance, in
\cite{sewbag},  the states  depend on the cut-off $L$. Finally, since the
dynamics is related to a hamiltonian operator $H$, if this exists, and since $H$ has
to be replaced with $\{H_L\}$, $\alpha^t$ is
replaced by the family $\alpha_L^t(\cdot)=e^{iH_Lt}\cdot
e^{-iH_Lt}$. Therefore
$$
\Sc=\{(\A,\Ao),\Sigma,\alpha^t\}\longrightarrow\{\Sc_L=\{\A_L,\Sigma,\alpha_L^t\},L\in\Lambda\}.
$$

This is the general settings in which we are going to work. In Section II we will assume that the physical topological quasi *-algebra is somehow related to the family of regularized hamiltonians, $\{H_L\}$. In Section III we will remove this assumption, paying some price for this attempt of generalization. Section IV contains some physical applications while the conclusions are contained in Section V.

We devote the rest of this section to introduce few useful notation
on  quasi
*-algebras, which will be used in the rest of this paper.

\vspace{3mm}

Let $\A$ be a linear space, $\Ao\subset\A$ a $^\ast$-algebra with
unit $\1$ (otherwise  we can always add it): $\A$ is a {\it   quasi
$^\ast$-algebra over $\Ao$} if

\vspace{3mm}

{\bf [i]} the right and left multiplications of an element of $\A$
and an element of $\Ao$ are always defined and linear;


{\bf [ii]} $x_1 (x_2 a)= (x_1x_2 )a, (ax_1)x_2= a(x_1 x_2)$ and
$x_1(a
 x_2)= (x_1 a) x_2$, for each $x_1, x_2 \in \A_0$ and $a \in \A$;


{\bf [iii]} an involution * (which extends the involution of $\Ao$)
is defined in $\A$ with the property $(ab) ^\ast =b ^\ast a ^\ast$
whenever the multiplication is defined.



\vspace{3mm}

A quasi  $^\ast$ -algebra $(\A,\Ao)$ is {\it   locally convex} (or
{\it   topological}) if in $\A$ a { locally convex topology} $\hat\tau$
is defined such that (a) the involution is continuous and the
multiplications are separately continuous; and (b) $\Ao$ is dense in
$\A[\hat\tau]$.

Let $\{p_\alpha\}$ be a directed set of seminorms which defines
$\hat\tau$. The existence of such a directed set can always be assumed.
We can further  also assume that $\A[\hat\tau]$ is {\it complete}.
Indeed, if this is not so, then the $\hat\tau$-completion
$\tilde\A[\hat\tau]$ is again a topological quasi *-algebra over the
same
*-algebra $\Ao$.

\vspace{2mm}

A relevant example of a locally convex quasi *-algebra of operators can be
constructed as follows: let $\Hil$ be a separable Hilbert space and
$N$ an unbounded, densely defined, self-adjoint operator. Let
$D(N^k)$ be the domain of the operator $N^k$, $k\in \Bbb{N}_0=\Bbb{N}\cup\{0\}$, and $\D$
the domain of all the powers of $N$:  $ \D \equiv D^\infty(N) =
\cap_{k\geq 0} D(N^k). $ This set is dense in $\Hil$. Let us now
introduce $\Lc^\dagger(\D)$, the
*-algebra of all the {  closable operators} defined on $\D$ which,
together with their adjoints, map $\D$ into itself. Here the adjoint
of $X\in\Lc^\dagger(\D)$ is { $X^\dagger=X^*{\restr \D}$}.

\vspace{3mm}

In $\D$ the topology is defined by the following $N$-depending
seminorms: $\phi \in \D \mapsto \|\phi\|_n\equiv \|N^n\phi\|,$
 $n\in \mathbb{N}_0$, while the topology $\tau$ in $\Lc^\dagger(\D)$ is introduced by the seminorms
{\normalsize$$ X\in \Lc^\dagger(\D) \mapsto \|X\|^{f,k} \equiv
\max\left\{\|f(N)XN^k\|,\|N^kXf(N)\|\right\},\vspace{-2mm}$$} where
$k\in\mathbb{N}_0$ and   $f\in\C$, the set of all the positive,
bounded and continuous functions  on $\mathbb{R}_+$, which decrease
faster than any inverse power of $x$: $\Lc^\dagger(\D)[\tau]$ is a
{   complete *-algebra}.

It is clear that $\Lc^\dagger(\D)$ contains unbounded operators.
Indeed, just to consider the easiest examples, it contains all the
positive powers of $N$. Moreover, if for instance $N$ is the closure
of $N_o=a^\dagger\,a$, with $[a,a^\dagger]=\1$, $\Lc^\dagger(\D)$
also contains all positive powers of $a$ and $a^\dagger$.
    \vspace{2mm}

Let further {  $\Lc(\D,\D')$} be the set of all continuous maps from
$\D$ into $\D'$, with their topologies, \cite{aitbook}, and let
$\hat\tau$ denotes the topology defined by the seminorms $$ X\in
\Lc(\D,\D') \mapsto \|X\|^{f} = \|f(N)Xf(N)\|,$$ $f\in\C$. Then
$\Lc(\D,\D')[\hat\tau]$ is a { complete vector space}.

In this case { $\Lc^\dagger(\D)\subset\Lc(\D,\D')$} and the pair
$$(\Lc(\D_,\D')[\hat\tau],\Lc^\dagger(\D)[\tau])$$ is a {\it concrete realization} of a locally convex quasi
*-algebra.

\vspace{2mm}

Other examples of algebras of unbounded operators have been
introduced along the years, but since these will play no role in
this paper, we simply refer to \cite{aitbook} and \cite{bagrev} for
further results.

\section{A step toward generalization}

In this section we will generalize
our previous results, \cite{bagrev,bt3,bt4}, in the attempt to build
up a general algebraic setting which works well independently of the
particular physical system we are considering. In particular in
this section we will assume  that the finite volume hamiltonian $H_L$ associated to
the regularized system $\Sc_L$ mutually commute:
$[H_{L_1},H_{L_2}]=0$ for all $L_1, L_2\subset \Lambda$. It is
worth noticing that this requirement is not satisfied in general
for, e.g., mean field spin models while it holds true when we adopt the procedure
discussed for instance in \cite{bt4}. In this case $H_L$ is deduced
by an existing $H$ simply adopting an {\em occupation number} cutoff and the different $H_L$'s mutually commute.

\vspace{2mm}

Let $\{P_l, \,l\geq 0\}$ be a sequence of orthogonal projectors:
$P_l=P_l^\dagger=P_l^2$, $\forall\,l\geq 0$, with $P_lP_k=0$ if $l\neq k$ and $\sum_{l=0}^\infty P_l=\1$. These operators are
assumed to give the spectral decomposition of a certain operator
$S=\sum_{l=0}^\infty s_l\,P_l$. The coefficients $s_l$  are all non
negative and, in order to make the situation more interesting, they
diverge monotonically to $+\infty$ as $l$ goes to infinity. Also, here and in the rest of
the paper we will assume that $S$ is invertible and that $S^{-1}$ is a
bounded operator. Hence there exists $s>0$ such that $s_l\geq s$ for all $l\geq 0$. This is quite often a reasonable assumption which
does not change the essence of the problem at least if $S$ is
bounded from below. Since $P_lP_s=\delta_{l,s}P_l$ the operator
$Q_L:=\sum_{l=0}^LP_l$ is again a projection operator satisfying
$Q_L=Q_L^\dagger$ as well as $Q_LQ_M=Q_{min(L,M)}$. It is easy to
check that each vector of the set
$\E=\{\varphi_L:=Q_L\varphi,\,\varphi\in\Hil,\, L\geq 0\}$ belongs to the domain of $S$, $D(S)$, which is
therefore dense in $\Hil$. Hence, the operator $S$ is
self-adjoint, unbounded and densely-defined, and can be used as in the previous
section to define the algebraic and topological framework we will
work with. Let $\D=D^\infty(S):=\cap_{k\geq 0}D(S^k)$ be the domain
of all the powers of $S$. Since $\E\subseteq\D$, this set is also dense in
$\Hil$. Following the example discussed in the previous section, we
introduce now $\Lc^\dagger(\D)$ and the topology $\tau$ defined by
the following seminorms: $$ X\in \Lc^\dagger(\D) \mapsto
\|X\|^{f,k} \equiv \max\left\{\|f(S)XS^k\|,\|S^kXf(S)\|\right\},$$
where $k\in\mathbb{N}_0$ and   $f\in\C$. Briefly we will say that
$(f,k)\in\C_0:=(\C,\Bbb{N}_0)$. $\Lc^\dagger(\D)[\tau]$ is a
complete
*-algebra which is the $O^*$-algebra we will use. In the rest of the paper, for simplicity, we
will identify $\|X\|^{f,k}$ simply with $\|f(S)XS^k\|$. The
estimates for $\|S^kXf(S)\|$ are completely analogous and are left
to the reader.

As in Section I we could also introduce {  $\Lc(\D,\D')$}, the set of all continuous maps from
$\D$ into $\D'$, with their topologies, and  the topology $\hat\tau$ on it
defined by the seminorms $$ X\in \Lc(\D,\D') \mapsto \|X\|^{f} =
\|f(S)Xf(S)\|,$$ $f\in\C$. Then $\Lc(\D,\D')[\hat\tau]$ is a {
complete vector space}. Of course $(\Lc(\D_,\D')[\hat\tau],\Lc^\dagger(\D)[\tau])$ is a  locally convex quasi
*-algebra.

\vspace{3mm}

Let us now introduce a family of bounded operators
$\{H_M=\sum_{l=0}^Mh_l\,P_l\}$, $M\geq 0$, where $h_l$ are real
numbers. These operators satisfy the following: \be
H_M=H_M^\dagger,\quad \|H_M\|\leq \sqrt{\sum_{l=0}^Mh_l^2},\quad
[H_L,H_M]=0, \label{31}\en for all $L,M\geq 0$\footnote{Notice that we could also have followed a {\em reverse} approach: we use $\{H_L\}$ to define, if possible, a complete set of orthogonal projectors $\{P_l\}$, $\sum_lP_l=\1$, and, using these operators, a self-adjoint, unbounded and densely-defined operator $S$. Finally we use $S$ to construct the $O^*$-algebra $\LD$ and the topology $\tau$.}. We also have, for all $\varphi\in\D$ and $\forall\,L$, $SH_L\varphi=H_LS\varphi$. In the following we will simply say that $H_L$ and $S$ commute: $[S,H_L]=0$. For the time being,
we do not impose any other requirement on $\{h_l\}$. On the
contrary, since we are interested in {\em generalizing the
procedure}, we want our $\{h_l\}$ to produce situations apparently
out of control. For this reason we are interested in considering $h_l$
very rapidly increasing with $l$. For instance, if $f_0(x)$ is a
fixed function in $\C$ and if $h_l=\left(f_0(s_l)\right)^{-1}$, then
it is an easy exercise to check that the sequence $\{H_L\}$ does not
converge in the topologies $\tau$ or
$\hat\tau$. For this reason, and with this choice for $h_l$, the
sequence $H_L$ does not converge  to an operator $H$ of
$\Lc^\dagger(\D)$ or  to an element of $\Lc(\D,\D')$. Another
simple example can be constructed taking
$h_l=\left(f_0(s_l)\right)^{-2}$. Once again $\{H_L\}$ does not
converge  in the topology $\hat\tau$, and therefore it cannot define
an element of $\Lc(\D,\D')$. In other words, there exist conditions
on $h_l$ which prevents the sequence $\{H_L\}$ to define an element
of the topological quasi *-algebra defined by $S$. On the contrary,
if $h_l$ goes to infinity { but not too fast, for instance as some inverse power of $l$}, then
$\tau-\lim_{L,\infty}H_L$ exists and defines an element of
$\Lc^\dagger(\D)$. Moreover, if $\{h_l\}\in l^1(\Bbb{N})$, the limit of $H_L$ exists and belongs to
$B(\Hil)$.

\vspace{2mm}

The situation we are interested in is {\em the ugly one}: the
algebraic framework is fixed by $S$ while no hamiltonian operator
exists for the physical system $\Sc$. Nevertheless we will see that
even under these assumptions the algebraic dynamics of $\Sc$ can be
defined. More in details, the following proposition holds true:

\begin{prop}
Suppose that for some  $n\geq1$ $\{s_l^{-n}\}\in l^2(\Bbb{N})$. Then
we have
\begin{enumerate}
\item $\forall\,t\in\Bbb{R}$ \be
\lim_{L,M\rightarrow\infty}\left\|S^{-n}\left(e^{iH_Lt}-e^{iH_Mt}\right)\right\|=0;\label{32}\en

\item $\forall\,t\in\Bbb{R}$  $\tau-\lim_L\,e^{iH_Lt}=:T_t$ exists in $\Lc^\dagger(\D)$;

\item $\forall X\in\Lc^\dagger(\D)$ and $\forall\,t\in\Bbb{R}$, $\tau-\lim_L\,e^{iH_Lt}\,X\,e^{-iH_Lt}=:\alpha^t(X)$ exists in
$\Lc^\dagger(\D)$;

\item $\forall X\in\Lc^\dagger(\D)$, $\alpha^t(X)=T_tXT_{-t}$, $\forall\,t\in\Bbb{R}$.

\end{enumerate}

\label{prop21}
\end{prop}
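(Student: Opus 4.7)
The plan is to exploit the simultaneous commutativity of $S$, $f(S)$, $S^k$, $S^{-n}$, $H_L$ and $e^{\pm iH_Lt}$, all of which are spectrally diagonal on $\{P_l\}$. The basic identity is, for $L<M$,
$$e^{iH_Lt}-e^{iH_Mt}=\sum_{l=L+1}^{M}(1-e^{ih_lt})\,P_l,$$
together with its variants obtained by multiplying on either side with $f(S)$, $S^k$ or $S^{-n}$. For (1), this gives the block-diagonal operator $S^{-n}(e^{iH_Lt}-e^{iH_Mt})=\sum_{l=L+1}^{M}s_l^{-n}(1-e^{ih_lt})P_l$, whose operator norm is $\sup_{L<l\leq M}s_l^{-n}|1-e^{ih_lt}|\leq 2\,s_{L+1}^{-n}$. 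Since $\{s_l^{-n}\}\in l^2(\mathbb{N})$ and $\{s_l\}$ is monotonically increasing to $+\infty$, $s_l^{-n}\to 0$, so the bound vanishes as $L\to\infty$.

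For (2) I perform the analogous block-diagonal calculation to obtain
$$\|e^{iH_Lt}-e^{iH_Mt}\|^{f,k}=\sup_{L<l\leq M}f(s_l)\,s_l^k\,|1-e^{ih_lt}|\leq 2\sup_{l>L}f(s_l)\,s_l^k.$$
Because $f\in\C$ decays faster than any inverse power, $f(x)x^k\to 0$ as $x\to\infty$, and the supremum vanishes with $L$. Hence $\{e^{iH_Lt}\}$ is $\tau$-Cauchy and the completeness of $\Lc^\dagger(\D)[\tau]$ delivers $T_t$.

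The technical core is (3). I split
$$e^{iH_Lt}Xe^{-iH_Lt}-e^{iH_Mt}Xe^{-iH_Mt}=(e^{iH_Lt}-e^{iH_Mt})Xe^{-iH_Lt}+e^{iH_Mt}X(e^{-iH_Lt}-e^{-iH_Mt})$$
and estimate each piece in $\|\cdot\|^{f,k}$. The crucial observation is that $g(x):=x^n f(x)$ again lies in $\C$. For the first piece I commute $e^{-iH_Lt}$ past $S^k$ (both commute with $S$, and $e^{-iH_Lt}$ is unitary), factor $f(S)=g(S)\,S^{-n}$, and slide the resulting $S^{-n}(e^{iH_Lt}-e^{iH_Mt})$ past $g(S)$; this bounds the piece by $\|S^{-n}(e^{iH_Lt}-e^{iH_Mt})\|\cdot\|X\|^{g,k}$, which vanishes by (1). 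For the second piece I commute $e^{iH_Mt}$ past $f(S)$ (unitary, norm one) and $(e^{-iH_Lt}-e^{-iH_Mt})$ past $S^k$, then factor $S^k(e^{-iH_Lt}-e^{-iH_Mt})=S^{k+n}\cdot S^{-n}(e^{-iH_Lt}-e^{-iH_Mt})$, producing the bound $\|X\|^{f,k+n}\cdot\|S^{-n}(e^{-iH_Lt}-e^{-iH_Mt})\|$, again vanishing by (1). Cauchyness and completeness yield $\alpha^t(X)\in\Lc^\dagger(\D)$.

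Part (4) follows from the decomposition
$$e^{iH_Lt}Xe^{-iH_Lt}-T_tXT_{-t}=(e^{iH_Lt}-T_t)Xe^{-iH_Lt}+T_tX(e^{-iH_Lt}-T_{-t}),$$
noting that $T_{\pm t}$ commute with $S$ (the set of operators commuting with $S$ is $\tau$-closed). The same seminorm manipulation used in (3), now with the differences $(e^{\pm iH_Lt}-T_{\pm t})$ in place of the Cauchy differences, produces bounds that vanish by (2). The main obstacle is precisely (3): since $X$ does not commute with $S$, one cannot directly combine $f(S)$ with the small factor $S^{-n}(e^{\pm iH_Lt}-e^{\pm iH_Mt})$ across $X$; the identity $f(S)=g(S)\,S^{-n}$ with $g\in\C$, together with careful bookkeeping of which factors commute with $S$, is what makes the estimate go through.
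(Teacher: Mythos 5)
Your proof is correct and follows essentially the same route as the paper's: the key estimate $\left\|S^{-n}\left(e^{iH_Lt}-e^{iH_Mt}\right)\right\|\rightarrow0$ obtained from the joint diagonalization over $\{P_l\}$, insertion of powers of $S$ to move this regularizing factor through the seminorms, and $\tau$-completeness of $\Lc^\dagger(\D)$. The only cosmetic differences are that you bound the block-diagonal norms by a supremum rather than by an $\ell^2$ tail, and that in parts (3)--(4) you use the factorization $f(x)=g(x)x^{-n}$ with $g\in\C$ for both terms of the splitting, where the paper disposes of one of the two terms by invoking the separate continuity of multiplication in $\tau$.
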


\begin{proof}

\begin{enumerate}

\item

First of all let us recall that, for all $L$ and $M$, $[H_L,H_M]=0$ and $[S,H_L]=0$. Therefore, assuming that $M>L$ to fix the
ideas, and calling $H_{M,L}:=H_M-H_L$, with simple computations we
have
$$
\left\|S^{-n}\left(e^{iH_Lt}-e^{iH_Mt}\right)\right\|=2
\left\|S^{-n}\sin\left(\frac{tH_{M,L}}{2}\right)\right\|=$$
$$=\left\|\sum_{k=L+1}^M\frac{1}{s_k^n}\,
\sin\left(\frac{th_k}{2}\right)\,P_k\right\|\leq
\sqrt{\sum_{k=L+1}^M\frac{1}{s_k^{2n}}}\rightarrow 0,
$$
when $L,M\rightarrow\infty$ because of our assumption on the
sequence $\{s_l\}$.

\item The second statement follows from the previous result and from the following simple
estimate:
$$
\left\|f(S)\left(e^{iH_Lt}-e^{iH_Mt}\right)S^k\right\|\leq
\|f(S)S^{k+n}\|\,
\left\|S^{-n}\left(e^{iH_Lt}-e^{iH_Mt}\right)\right\|,
$$
where $n$ is the positive integer in our assumption. This is a consequence of the commutativity between $H_L$ and $S$, and  clearly implies that the left-hand side goes
to zero when $L,M\rightarrow\infty$. We call $T_t$ the limit of the
sequence $\{e^{iH_Lt}\}$  in $\tau$. Needless to say, since $\Lc^\dagger(\D)$ is
$\tau$-complete, $T_t\in \Lc^\dagger(\D)$.

\item We use the following estimate, which follows from the commutativity between $H_L$ and $S$ and of the fact that $e^{iH_Lt}$ is unitary:
$$
\left\|f(S)\,\left(e^{iH_Lt}\,X\,e^{-iH_Lt}-e^{iH_Mt}\,X\,e^{-iH_Mt}\right)S^k\right\|\leq$$
$$\leq
\left\|f(S)e^{iH_Lt}X\left(e^{-iH_Lt}-e^{-iH_Mt}\right)S^k\right\|+\left\|f(S)\left(e^{iH_Lt}-e^{iH_Mt}\right)Xe^{-iH_Mt}S^k\right\|\leq
$$
$$
\leq
\left\|f(S)e^{iH_Lt}XS^{k+n}\right\|\,\left\|S^{-k-n}\left(e^{-iH_Lt}-e^{-iH_Mt}\right)S^k\right\|+
\left\|f(S)\left(e^{iH_Lt}-e^{iH_Mt}\right)XS^k\right\|\rightarrow0
$$
when $L,M$ go to infinity because of the previous results and
of the separate continuity of the multiplication in $\tau$. Of course, using
again the completeness of $\Lc^\dagger(\D)$, this means that for
each $X\in \Lc^\dagger(\D)$  and for each $t\in\Bbb{R}$, there
exists an element of $\Lc^\dagger(\D)$ which we call $\alpha^t(X)$,
which is the $\tau$-limit of the regularized time evolution
$\alpha_L^t(X):=e^{iH_Lt}\,X\,e^{-iH_Lt}$.

\item This last statement shows that $\alpha^t(X)$ can also be
obtained in a different way, just referring to the operator $T_t$
defined in 2. of this proposition. In other word, for each $X\in
\Lc^\dagger(\D)$ we have \be \alpha^t(X)=\tau-\lim_L
e^{iH_Lt}\,X\,e^{-iH_Lt}=\left(\tau-\lim_L e^{iH_Lt}\right)\,X\,
\left(\tau-\lim_L e^{-iH_Lt}\right)\label{33}\en The proof of this
equality goes like this: let us take $X\in \Lc^\dagger(\D)$. Then we
have
$$
\|\alpha^t(X)-T_tXT_{-t}\|^{f,k}\leq
\|\alpha^t(X)-\alpha_L^t(X)\|^{f,k}
+\|\alpha_L^t(X)-e^{iH_Lt}\,XT_{-t})\|^{f,k}+$$
$$+\|e^{iH_Lt}\,XT_{-t}-T_tXT_{-t}\|^{f,k},$$
and this right-hand side goes to zero term by term because of the previous results and
of the separate continuity of the multiplication in the topology
$\tau$. This concludes the proof.
\end{enumerate}
\end{proof}

\vspace{2mm}

{\bf Remark:} It is worth stressing here that the above proposition
only gives sufficient conditions for an algebraic dynamics to be
defined. In fact, we expect that milder conditions could suffice if we analyze directly the sequence $\{\alpha_L^t(X)\}$, instead of
$\{e^{iH_Lt}\}$ as we have done here. This is because $\alpha_L^t(X)$
contains commutators like $[H_L,X]$ which can be analyzed very
simply if $X$ is sharply localized, for instance, in some particular
lattice site and $H_L$ is localized in a finite volume labeled by
$L$. This is what happens, for instance, in spin models. We will consider this point of view in a further paper.

\vspace{2mm}

It is also possible to introduce $\alpha^t(X)$ starting from the
infinitesimal dynamics, i.e. from the derivation. For that we
introduce the following subset of $\Lc^\dagger(\D)$,
$\Lc_0^\dagger(\D):=\{x_M:=Q_MxQ_M, \, M\geq 0,\, x\in
\Lc^\dagger(\D)\}$. If $\{s_l^{-1}\}$ belongs to $l^2(\Bbb{N})$ this
set is $\tau$-dense in $\Lc^\dagger(\D)$. Indeed, let us take $y\in
\Lc^\dagger(\D)$ and let us put $y_M=Q_MyQ_M$. Then we have
$$
\|y-y_M\|^{f,k}\leq \|(\1-Q_M)S^{-1}\|\,\|f(S)SyS^k\|+
$$
$$
+\|f(S)yS^{k+1}\|\,\|S^{-1}(\1-Q_M)\|\leq
\sqrt{\sum_{l=M+1}^\infty\,s_l^{-2}}\,\left(\|f(S)SyS^k\|+\|f(S)yS^{k+1}\|\right)\rightarrow
0
$$
for $M\rightarrow \infty$. In our assumptions it is possible to
check directly that, for each $X\in \Lc^\dagger(\D)$ and for each fixed $L$,
\be
\tau-\lim_{M\rightarrow\infty}\,\alpha_L^t(X_M)=\alpha_L^t(X),\qquad
\tau-\lim_{M\rightarrow\infty}\,\alpha^t(X_M)=\alpha^t(X)
\label{34}\en Moreover, if we introduce the {\em regularized
derivation} as $\delta_L(X)=i[H_L,X]$ and, by recursion,
$\delta_L^k(X)=i[H_L,\delta_L^{k-1}(X)]$, $k\geq 1$, we also find
\be \delta_L^k(X_M)=\delta_M^k(X_M)=Q_M\delta_M^k(X_M)Q_M=
Q_M\delta_M^k(X)Q_M, \label{35} \en for each $X\in \Lc^\dagger(\D)$,
for each $k\in\Bbb{N}$ and for each $L\geq M$. Again, the proof of
this statement, which is easily deduced by induction, is left to the
reader. We just want to notice here that the proof is strongly based on the relations between $S$, $H_L$ and $Q_L$.

Another useful result is given in the following equality: \be
\alpha_L^t(X_M)=Q_M\alpha_L^t(X_M)Q_M=Q_M\alpha_L^t(X)Q_M,
\label{36}\en which again holds for each $X\in \Lc^\dagger(\D)$ and
for each $L$ and $M$. This is a direct consequence of the following
commutation rule: $[H_L,Q_M]=0$, $\forall\,L,M$. As a consequence of
(\ref{36})  we deduce that, as already stated in  (\ref{34}), for each
fixed $L$, $\tau-\lim_{M}\,\alpha_L^t(X_M)=\alpha_L^t(X)$. In
order to relate the infinitesimal and the finite dynamics, we still
need another result. For each $X\in\Lc^\dagger(\D)$ , for each $L,
M, k$ and for each $f(x)\in\C$ we put $X^{(f,k)}_M:=f(S)X_MS^k$. Of
course, since $[S,Q_M]=0$ we can write
$X^{(f,k)}_M=Q_M\left(f(S)XS^k\right)Q_M=Q_MX^{(f,k)}Q_M$. Then we
have: (1) $X^{(f,k)}\in\Lc^\dagger(\D)$; (2)
$f(S)\alpha_L^t(X_M)S^k=\alpha_L^t(X_M^{(f,k)})$; (3)
$f(S)\delta_M^l(X_M)S^k=\delta_M^l(X_M^{(f,k)})$, $\forall l\geq 1$.
The first two assertions are trivial. The proof of the last
one is a bit more difficult and, again, can be deduced by induction
on $l$. This statement is useful to prove that, $\forall
X\in\Lc^\dagger(\D)$ and for all $L,M$, \be
\tau-\lim_{N,\infty}\,\sum_{k=0}^N
\,\frac{t^k}{k!}\,\delta_L^k(X_M)=\alpha_L^t(X_M)\label{37}\en Let
us prove this statement. Since for each fixed $L$ and $M$ both $H_L$ and  $X_M$
are bounded operators, we have
$$
\left\|\alpha_L^t(X_M)-\sum_{k=0}^N\frac{t^k}{k!}\,\delta_L^k(X_M)\right\|=
\left\|e^{iH_Lt}X_Me^{-iH_Lt}-\sum_{k=0}^N\frac{t^k}{k!}\,\delta_L^k(X_M)\right\|\rightarrow
0
$$
when $N\rightarrow\infty$. Therefore, for each
$(f,k)\in\C_0$,
$$
\left\|\alpha_L^t(X_M)-\sum_{k=0}^N\frac{t^k}{k!}\,\delta_L^k(X_M)\right\|^{f,k}=
\left\|\alpha_L^t(X_M^{(f,k)})-\sum_{k=0}^N\frac{t^k}{k!}\,\delta_L^k(X_M^{(f,k)})\right\|,
$$
which again goes to zero when $N$ goes to infinity. Therefore, using (\ref{34}),
we conclude that $$
\alpha^t(X)=\tau-\lim_L\,\alpha_L^t(X)=\tau-\lim_L\left(\tau-\lim_M\,\alpha_L^t(X_M)\right)=$$
\be
=\tau-\lim_{L,M,N\rightarrow\infty}\sum_{k=0}^N\frac{t^k}{k!}\delta_L^k(X_M)\label{38}\en
This result shows the relation between {\em derivation} and {\em
time evolution}, giving still another possibility for defining the
time evolution in $\Lc^\dagger(\D)$. Notice that here the order in
which the limits are taken is important.

We refer to Section IV for some applications of our results to
$QM_\infty$. More applications will be discussed in a paper in
preparation.

\vspace{2mm}

{\bf Remarks:} (1) Not many substantial differences arise when the
operators $S$ and $H_L$ have continuous (or mixed) spectra. In this
case, instead of the {\em discrete formulas} given above, we must
use $S=\int_{0}^\infty s(\lambda)\,dE_\lambda$ and $H_L=\int_{0}^L
h(\lambda)\,dE_\lambda$.

(2) Secondly it is worth stressing that this approach naturally
extend our previous results, see \cite{bt4} for instance, since if
$h_l\equiv s_l$ we recover what is stated in that paper. We believe
that the present situation is more relevant for applications to
$QM_\infty$ since it represents a first step in applying quasi
*-algebras for the analysis of physical models which {\bf do not}
admit a global hamiltonian.

\vspace{2mm}

\subsection{Some other considerations}

Up to now we have made a strong assumption, i.e. that $S$ and $H_L$
admit the same spectral projections. This is something we would
avoid, if possible, since it is rarely true in real
quantum mechanical models. We still suppose that $S$ and $H_L$ have discrete
spectra, $S=\sum_{l=0}^\infty s_lP_l$ and $H_M=\sum_{l=0}^M h_lE_l$,
but we also consider here the case in which $E_j$ and $P_j$ do not coincide for all $j\in\Bbb{N}_0$. Then almost all the same conclusions as before
can still be deduced if, for instance, $P_j=E_j$ definitively, i.e.
for $j\geq M$ for some fixed $M$, or if
$[E_l,P_j]=0$ for all $l,j$.

A less trivial condition is the following one: let $\{\varphi_l\}$ and
$\{\psi_l\}$ be two different orthonormal bases of $\Hil$ and
suppose that $P_l=|\varphi_l><\varphi_l|$ and
$E_l=|\psi_l><\psi_l|$. Here we are using the Dirac bra-ket notation. It is clear that $[E_l,P_j]\neq0$ in
general. Nevertheless, if the $\psi_l$'s are finite linear
combinations of the $\varphi_j$'s, then again the above results
still can be proved. We do not give here the detailed proof of these
claims, since they do not differ very much from those given before
but for some extra difficulties arising here and there in the various
estimates. We will prove an analogous result in the next section, starting from quite a different assumption.

We end this section with a final comment concerning the limit of Gibbs states in the
conditions considered so far.
 Let
$\omega_L(.)$ be the linear functional defined on
$(\Lc(\D,\D'),\Lc^\dagger(\D))$ as $\omega_L(.)=\frac{tr(e^{-\beta
H_L}.)}{N_L}$, $N_L=tr(e^{-\beta H_L})$. Here $tr$ is defined as
$tr(A)=\sum_{k=0}^\infty <f_k,Af_k>$, $\{f_k\}$ being an o.n. basis
of $\Hil$ contained in $\D$ and $\beta>0$. Then it is clear that
$\omega_L(\1)=1$ for all $L$ and it is not hard to check that, for
all $(f,k)\in\C_0$,
$$
\left\|f(S)\left(\frac{e^{-\beta H_M}}{N_M}-\frac{e^{-\beta
H_L}}{N_L}\right)S^k\right\|\rightarrow 0
$$
when $L,M\rightarrow\infty$.

\vspace{3mm}

This means that calling  $\rho_L:=\frac{e^{-\beta
H_L}}{tr_L\left(e^{-\beta H_L}\right)}$ the density matrix of a
Gibbs state at the inverse temperature $\beta$, then
$\tau-\lim_L\rho_L$ exists in $\LD$. But it is still to be
investigated whether this limit is a KMS state (in some {\em
physical sense}). We refer to \cite{aitbook} for some remarks
on KMS states on partial *-algebras.

\section{A more general setting}

In the previous section we have discussed what we have called {\em a
first step toward generalization}: indeed, the main improvement with
respect to our older results is mainly the (crucial) fact that
the sequence of {\em regularized hamiltonians} $H_L$ do not necessarily converge
in the $\tau$-topology to an element of $\LD$. It also does not
converge in $\Lc(\D,\D')$. So we have no hamiltonian for $\Sc$ but
only a regularized family of subsystems $\Sc_L$, where $L$ is the
regularizing cutoff, and their related regularized hamiltonians $H_L$. However we have worked with the following strong
requirement: $[H_L,H_M]=0$ for all $L, M$ and, moreover, for each
$L$ the hamiltonian $H_L$ commutes (in the sense discussed previously) with
$S$. In this section we will work without these assumptions,
generalizing our results as much as we can to a much more general
situation: once again $\{H_L\}$ does not converge in any suitable
topology. Moreover, no a-priori relation between $H_L$ and $S$ is assumed and the different $H_L$'s are not required to commute in general. This, we believe,
makes our next results more useful for general models in $QM_\infty$. However, since in general there is no reason for $e^{iH_Lt}$, $L\geq0$, to leave $\D$ invariant, see \cite{timm} for counterexamples, we simply require that for all $L$ the operator $e^{iH_Lt}$ belongs to $\LD$.

We begin by stating the main proposition of this section which
extends Proposition \ref{prop21} to this more general setting. In particular we will give a sufficient condition which allows us to define the algebraic
dynamics $\alpha^t$. In the second part of this section  we will see
if and when this condition is satisfied. We give here the
details of the proof, even if they may look very similar to those
of Proposition \ref{prop21}, in order to stress where the
commutativity between $H_L$ and $S$ plays a role and where it does not.

\begin{prop}\label{prop22}
Suppose that $\forall\, k\geq0$ $\exists\,n> 0$ such that \be
\lim_{L,M\rightarrow\infty}\left\|S^{-k-n}\left(e^{iH_Lt}-e^{iH_Mt}\right)S^k\right\|=0
\label{41}\en for $t\in\Bbb{R}$. Then:
\begin{enumerate}
\item there exists $T_t=\tau-\lim_L\,e^{iH_Lt}\in\LD$;
\item for all $X\in\LD$ and for $t\in\Bbb{R}$ there exists $\alpha^t(X)=
\tau-\lim_L\,e^{iH_Lt}Xe^{-iH_Lt}\in\LD$;
\item for all $X\in\LD$ and for $t\in\Bbb{R}$ we have
$\alpha^t(X)=T_tXT_{-t}$.
\end{enumerate}

\end{prop}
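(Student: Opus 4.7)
The plan is to follow the structure of Proposition \ref{prop21}, but to locate the two places where the commutativity $[S,H_L]=0$ was used and replace them with arguments based on hypothesis (\ref{41}) together with the separate continuity of multiplication in $\tau$. I would establish items 1, 2, 3 in order.

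For item 1, hypothesis (\ref{41}) directly supplies a $\tau$-Cauchy estimate via the elementary insertion
$$\|f(S)(e^{iH_Lt}-e^{iH_Mt})S^k\| \le \|f(S)S^{k+n}\| \cdot \|S^{-k-n}(e^{iH_Lt}-e^{iH_Mt})S^k\|,$$
where $n=n(k)$ is chosen according to (\ref{41}). The first factor is finite because $f\in\C$ decays faster than any polynomial; the second tends to zero by hypothesis. The companion seminorm $\|S^k(\cdot)f(S)\|$ is controlled identically after adjointing and applying (\ref{41}) to $-t$. Thus $\{e^{iH_Lt}\}$ is $\tau$-Cauchy, and $\tau$-completeness of $\LD$ delivers $T_t\in\LD$; the same argument for $-t$ gives $T_{-t}\in\LD$.

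For item 2, I would start from the standard splitting
$$\alpha_L^t(X)-\alpha_M^t(X) = e^{iH_Lt}X(e^{-iH_Lt}-e^{-iH_Mt}) + (e^{iH_Lt}-e^{iH_Mt})Xe^{-iH_Mt}.$$
The first summand is handled exactly as in Proposition \ref{prop21}.3, with one modification: the factor $\|f(S)e^{iH_Lt}XS^{k+n}\|$ is no longer equal to $\|f(S)XS^{k+n}\|$ since $[S,H_L]\neq0$, but it is bounded in $L$ because $\{e^{iH_Lt}X\}$ is $\tau$-convergent (to $T_tX$) and hence $\tau$-bounded. The second summand is the place where commutativity was essential in Proposition \ref{prop21}; I would insert $T_{-t}$ as an intermediary:
$$(e^{iH_Lt}-e^{iH_Mt})Xe^{-iH_Mt} = (e^{iH_Lt}-e^{iH_Mt})XT_{-t} + (e^{iH_Lt}-e^{iH_Mt})X(e^{-iH_Mt}-T_{-t}).$$
The first piece is the $\tau$-null sequence $(e^{iH_Lt}-e^{iH_Mt})$ times the fixed element $XT_{-t}\in\LD$, so it vanishes in $\tau$ by separate continuity. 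The second piece is dominated by
$$\|f(S)(e^{iH_Lt}-e^{iH_Mt})XS^{k+n}\| \cdot \|S^{-k-n}(e^{-iH_Mt}-T_{-t})S^k\|,$$
whose first factor is a seminorm of the $\tau$-null sequence $(e^{iH_Lt}-e^{iH_Mt})X$ (so it vanishes), and whose second factor vanishes because (\ref{41}) applied to $-t$ makes $\{S^{-k-n}e^{-iH_Mt}S^k\}_M$ Cauchy in operator norm; its limit must agree on $\D$ with the bounded extension of $S^{-k-n}T_{-t}S^k$ by a density argument that compares the strong operator-norm limit with the weak limit obtained from $\tau$-convergence of $e^{-iH_Mt}\to T_{-t}$ tested against the dense set $\{g(S)\xi:g\in\C,\xi\in\Hil\}$. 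Combining, $\{\alpha_L^t(X)\}$ is $\tau$-Cauchy and $\alpha^t(X)\in\LD$. Item 3, $\alpha^t(X)=T_tXT_{-t}$, then follows verbatim from the telescoping of Proposition \ref{prop21}.4 applied with the newly-established $T_{\pm t}$.

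The main obstacle is precisely the second summand $(e^{iH_Lt}-e^{iH_Mt})Xe^{-iH_Mt}$ above: in the commutative setting of Proposition \ref{prop21} the unitary $e^{-iH_Mt}$ could be commuted past $S^k$ and absorbed for free via unitarity, but here it cannot. The price of replacing this with an insertion of $T_{-t}$ is the weak/strong-convergence density argument needed to identify the operator-norm limit $\lim_L S^{-k-n}e^{-iH_Lt}S^k$, guaranteed by (\ref{41}), with the restriction to $\D$ of $S^{-k-n}T_{-t}S^k$; once that identification is in hand, the remaining work is purely separate-continuity bookkeeping.
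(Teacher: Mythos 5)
Your proof is correct and its skeleton --- the insertion of $S^{k+n}S^{-k-n}$ against hypothesis (\ref{41}), $\tau$-completeness of $\LD$ for item 1, the two-term splitting of $\alpha_L^t(X)-\alpha_M^t(X)$ for item 2, and a telescoping argument for item 3 --- coincides with the paper's. The one place you genuinely diverge is the second summand $(e^{iH_Lt}-e^{iH_Mt})Xe^{-iH_Mt}$, which you single out as the main obstacle and attack by inserting $T_{-t}$ and then identifying the operator-norm limit of $S^{-k-n}e^{-iH_Mt}S^k$ with the bounded extension of $S^{-k-n}T_{-t}S^k$ through a weak-versus-norm density argument. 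The paper avoids all of this: it bounds that summand by $\|f(S)(e^{iH_Lt}-e^{iH_Mt})XS^{k+n}\|\cdot\|S^{-k-n}e^{-iH_Mt}S^k\|$, observes that the first factor is a seminorm of the $\tau$-null sequence $(e^{iH_Lt}-e^{iH_Mt})X$ (separate continuity of multiplication plus item 1), and that the second factor need only be \emph{bounded} in $M$ --- which is automatic because (\ref{41}) makes $\{S^{-k-n}e^{-iH_Mt}S^k\}_M$ norm-Cauchy, hence norm-bounded. So no identification of the limit is required; your detour is sound but buys nothing. One small caveat on item 3: it does not follow ``verbatim'' from Proposition \ref{prop21}.4, since without commutativity the middle telescoping term is again a product of two $L$-dependent factors and needs the same $S^{k+n}S^{-k-n}$ insertion together with the boundedness of $\|S^{-k-n}e^{-iH_Lt}S^k\|$; the paper telescopes through $T_tXe^{-iH_Lt}$ and handles it exactly that way. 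Since you already established those ingredients in item 2, this is a presentational rather than a substantive gap.
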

\begin{proof}
\begin{enumerate}
\item
Let $(f,k)\in\C_0$ and let $n>0$ be the  integer fixed in our assumptions. Then we have
$$
\left\|f(S)\left(e^{iH_Lt}-e^{iH_Mt}\right)S^k\right\|\leq
\left\|f(S)\1S^{k+n}\right\|\left\|S^{-k-n}\left(e^{iH_Lt}-e^{iH_Mt}\right)S^k\right\|\rightarrow0
$$
when $L,M\rightarrow\infty$, since $\1\in\LD$. Recalling that $\LD$ is
$\tau-$complete, there exists $T_t\in\LD$ which is the $\tau-$limit
of $\left\{e^{iH_Lt}\right\}$.
\item
To prove our statement, we observe that
$$
\left\|f(S)\,\left(e^{iH_Lt}\,X\,e^{-iH_Lt}-e^{iH_Mt}\,X\,e^{-iH_Mt}\right)S^k\right\|\leq$$
$$\leq
\|f(S)e^{iH_Lt}XS^{k+n}\|\,\left\|S^{-k-n}\left(e^{-iH_Lt}-e^{-iH_Mt}\right)S^k\right\|+$$
$$+\left\|f(S)\left(e^{iH_Lt}-e^{iH_Mt}\right)XS^{k+n}\right\|\,\|S^{-k-n}e^{-iH_Mt}S^k\|
$$
Notice that this result looks slightly different from the analogous
one in Proposition \ref{prop21} because no commutativity is assumed
here. The right hand side above goes to zero when $L,M$ go to infinity. In fact, since the
multiplication is separately continuous and since $e^{-iH_Lt}$ is
$\tau-$converging, then $e^{-iH_Lt}X$ $\tau-$converges as well.
Therefore, for each $L$, $\|f(S)e^{iH_Lt}XS^{k+n}\|$ is bounded, and
bounded is also $\|S^{-k-n}e^{-iH_Mt}S^k\|$ for each $M$ and $t$, due to our
assumption. Finally, using  the completeness of $\Lc^\dagger(\D)$,
 there exists an element of $\Lc^\dagger(\D)$ which we
call $\alpha^t(X)$, which is the $\tau$-limit of
$\alpha_L^t(X):=e^{iH_Lt}\,X\,e^{-iH_Lt}$.

\item Again, the statement is close to that in Proposition \ref{prop21}, but the proof is slightly different since
$[H_L,S]\neq 0$. For each $X\in \Lc^\dagger(\D)$, $\forall\,t\in\Bbb{R}$ and for each
$(f,k)\in\C_0$, we have
$$
\|\alpha_L^t(X)-T_tXT_{-t}\|^{f,k}\leq\|\alpha_L^t(X)-T_tXe^{-iH_Lt}\|^{f,k}
+\|T_tXe^{-iH_Lt}-T_t\,XT_{-t})\|^{f,k}\leq$$
$$\leq \|f(S)\left(e^{iH_Lt}-T_t\right)XS^{k+n}\|\,\|S^{-k-n}e^{-iH_Lt}S^k\|+
\|f(S)T_tX\left(e^{-iH_Lt}-T_{-t}\right)S^{k}\|$$ which goes to zero
term by term because of the previous results and of the separate
continuity of the multiplication in the topology $\tau$. This implies the statement.
\end{enumerate}
\end{proof}

\vspace{2mm}

We are now ready to look for conditions, easily verifiable,
which imply the main hypothesis of this proposition, i.e. the existence,
for each $k\geq 0$, of a positive integer $n$ such that
$\lim_{L,M\rightarrow\infty}\left\|S^{-k-n}\left(e^{iH_Lt}-e^{iH_Mt}\right)S^k\right\|=0$ for $t\in\Bbb{R}$.

The first trivial remark is the following: if $[H_L,S^{-1}]=0$ for
all $L$ then the above condition simplifies. In this case, it is
enough to prove that there exists $n>0$ such that
$\lim_{L,M\rightarrow\infty}\left\|S^{-n}\left(e^{iH_Lt}-e^{iH_Mt}\right)\right\|=0$.
This requirement is now completely analogous to that of the previous
section, and is satisfied if the $s_l$ in the spectral
decomposition of $S$ are such that $\{s_l^{-n}\}$ belongs to
$l^2(\Bbb{N})$.

In this section we are more interested in considering the situation
where $[H_L,S^{-1}]\neq0$, and so it is not surprising that we will
not be able to define an algebraic dynamics in all of $\LD$ but only on
certain subspaces. First, for $\alpha>0$, we define the following
set: \be \A_S^{(\alpha)}=\{X\in B(\Hil)\cap\LD,\quad \|T_S(X)\|\leq
\alpha \|X\|\}\label{42}\en where $T_S(X)=S^{-1}XS$. Of course, the
first thing to do is to check that $\A_S^{(\alpha)}$ is not empty
and, even more, that it contains sufficiently many elements. It is
indeed clear that all the multiples of the identity operator $\1$
belong to $\A_S^{(1)}$. It is clear as well that by the definition
itself, $\A_S^{(\alpha)}$ consists of bounded operators which are
still bounded after the action of $T_S$. For instance all the
multiples and the positive powers of $S^{-1}$ belong to some
$\A_S^{(\alpha)}$, as well as all those bounded operators which
commute with $S^{-1}$. Another trivial feature is that the set
$\{\A_S^{(\alpha)}, \,\alpha>0\}$ is a chain: if
$\alpha_1\leq\alpha_2\leq\alpha_3\leq\ldots$ then
$\A_S^{(\alpha_1)}\subseteq\A_S^{(\alpha_2)}\subseteq\A_S^{(\alpha_3)}\subseteq\cdots$.
We also notice that $\A_S^{(\alpha)}$ is not an algebra by itself,
since if $X,Y\in\A_S^{(\alpha)}$ then $XY\in\A_S^{(\alpha^2)}$.

\vspace{2mm}

{\bf Example 1:} Other elements of some $\A_S^{(\alpha)}$ are those
$X\in B(\Hil)\cap\LD$ whose commutator (in the sense of the
unbounded operators) with $S$ looks like $[X,S]=BX$, for some
bounded operator $B$. In this case, in fact, it is easily checked
that $X\in\A_S^{(\hat\alpha)}$, where $\hat\alpha=\|\1+S^{-1}B\|$.

Before going on it is worth noticing that this requirement is
satisfied in a concrete physical system. For instance, let us consider an
infinite $d-$dimensional lattice $\Lambda$. To the lattice site $j$
we can {\em attach} a two-dimensional Hilbert space $\Hil_j$ spanned
by $\varphi_0^{(j)}$ and
$\varphi_1^{(j)}:=a_j^\dagger\varphi_0^{(j)}$, where $a_j$ and
$a_j^\dagger$ are the fermionic operators satisfying
$\{a_j,a_k^\dagger\}=a_ja_k^\dagger+a_k^\dagger a_j=\delta_{j,k}$,
$j,k\in\Lambda$. $a_j$ and $a_j^\dagger$ are the so called
annihilation and creation operators, while $N_j=a_j^\dagger a_j$ is
called the number operator. It is well known that all these
operators are bounded. Let us now define the {\em total number
operator} $S=\sum_{j\in\Lambda}N_j$. Of course $S$ is unbounded on
$\Hil_\infty=\otimes_{j\in\Lambda}\,\Hil_j$, see \cite{thibook} for infinite tensor product Hilbert spaces. Now, since
operators localized in different lattice sites commute, if we take
$X=a_{j_0}$ we deduce that $[S,X]=(\1-N_{j_0})X$, which is exactly
as required above since $\1-N_{j_0}$ is bounded. This implies that
$a_{j_0}$ belongs to some $\A_S^{(\alpha)}$, as well as
$a_{j_0}^\dagger$, $N_{j_0}$ and all the {\em strictly localized}
operators, i.e. those operators $X=X_{j_0}\cdots X_{j_n}$, with
$n<\infty$.

Needless to say, this same example can be modified replacing
fermionic degrees of freedom with spin operators, or, more
generally, with $N\times N$ matrices. As for its extension to bosons, the situation is more difficult and will not be discussed here.

\vspace{2mm}

This example, although physically motivated, might suggest  the
reader  that there exist only few non trivial elements in
$\A_S^{(\alpha)}$. The following result shows that this is not
so: each $X\in\LD$ produces, in a way which we are going to
describe, another element, $X_L$, which belongs to some
$\A_S^{(\hat\alpha)}$. Moreover we will show also that the chain
$\{\A_S^{(\alpha)}, \,\alpha>0\}$ satisfies a density property in
$\LD$.

First we repeat the same construction as in the previous section:
suppose that $S=\sum_{l=0}^\infty s_l\,P_l$, where $\{P_l, \,l\geq
0\}$ is a sequence of orthogonal projectors:
$P_l=P_l^\dagger=P_l^2$, $\forall\,l\geq 0$. Moreover
$P_lP_s=\delta_{l,s}P_l$. Then the operator $Q_L:=\sum_{l=0}^LP_l$
is again a projection operator satisfying $Q_L=Q_L^\dagger$ as well
as $Q_LQ_M=Q_{min(L,M)}$. Since $SQ_L=\sum_{l=0}^L s_l\,P_l$ and
$S^{-1}Q_L=\sum_{l=0}^L s_l^{-1}\,P_l$, we get \be\|SQ_L\|\leq
\sqrt{\sum_{l=0}^L\,s_l^2}=:s_L^+,\qquad \|S^{-1}Q_L\|\leq
\sqrt{\sum_{l=0}^L\,s_l^{-2}}=:s_L^-\label{43}\en

The following Lemma can be easily proved:

\begin{lemma}\label{lemma1}
Let $X\in\LD$ and put $X_L:=Q_LXQ_L$. Then $X_L$ belongs to
$\A_S^{(\beta)}$ for each $\beta\geq s_L^+s_L^-$. Moreover: (i)
$\forall\,n\geq1$ $T_S^n(X_L)=Q_L\,T_S^n(X_L)\,Q_L$; (ii)
$\forall\,n\geq1$ $T_S^n(X_L)\in\A_S^{(\beta)}$; (iii)
$\forall\,n\geq1$ $\|T_S^n(X_L)\|\leq \beta^n\|X_L\|$.
\end{lemma}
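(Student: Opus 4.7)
The plan rests on the observation that, since $Q_L=\sum_{l=0}^{L}P_{l}$ is built from spectral projections of $S$, the operator $Q_L$ commutes with $S$, $S^{-1}$, and with every $f(S)$; moreover $\Hil_L:=Q_L\Hil\subseteq\D$ and, by (\ref{43}), $S$ acts boundedly on $\Hil_L$ with $\|SQ_L\|\leq s_L^{+}$ and $\|S^{-1}Q_L\|\leq s_L^{-}$. All that follows is an application of this commutation together with the fact that $X_L$ is $Q_L$-truncated on both sides.

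I would begin by verifying $X_L\in B(\Hil)\cap\LD$. Membership in $\LD$ is immediate since $Q_L\in\LD$ (it is bounded and commutes with every $S^k$) and $\LD$ is an algebra. Boundedness follows because $Q_L\phi\in\Hil_L$ for every $\phi\in\Hil$, and $X\restr\Hil_L$ is bounded: picking any $f\in\C$ one has that $f(S)X$ is bounded (it is controlled by the $\tau$-seminorm $\|X\|^{f,0}$), while $f(S)^{-1}$ is bounded on $\Hil_L$ since the finitely many eigenvalues $f(s_0),\ldots,f(s_L)$ are strictly positive.

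To show $X_L\in\A_S^{(\beta)}$ for every $\beta\geq s_L^{+}s_L^{-}$, I would compute, for $\phi\in\D$, $T_S(X_L)\phi=S^{-1}Q_LXQ_LS\phi=S^{-1}X_L(SQ_L\phi)$, using $[S,Q_L]=0$ together with $Q_LSQ_L\phi=SQ_L\phi$ and $Q_LXQ_L=X_L$. Since $X_L$ maps $\Hil$ into $\Hil_L$, the vector $X_L\,SQ_L\phi$ lies in $\Hil_L$, so its image under $S^{-1}$ is controlled by $\|S^{-1}Q_L\|\leq s_L^{-}$. Chaining the three bounds yields $\|T_S(X_L)\phi\|\leq s_L^{-}\|X_L\|\,s_L^{+}\|\phi\|$, hence $\|T_S(X_L)\|\leq s_L^{+}s_L^{-}\|X_L\|\leq\beta\|X_L\|$, as required.

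The three numbered claims are then handled uniformly by induction on $n$. For (i), the base case follows by commuting $Q_L$ past $S^{\pm 1}$: $T_S(X_L)=Q_L(S^{-1}XS)Q_L=Q_LT_S(X_L)Q_L$, the second equality coming from $Q_L^2=Q_L$. The inductive step is identical with $Y:=T_S^{n-1}(X_L)$ in place of $X_L$, since the induction hypothesis $Y=Q_LYQ_L$ is exactly the structural input the base-case argument needs. Once (i) is in hand, the norm estimate of the previous paragraph applied to $Y$ gives $\|T_S^n(X_L)\|\leq s_L^{+}s_L^{-}\|T_S^{n-1}(X_L)\|$, which simultaneously yields (ii) and, upon iterating, the geometric bound of (iii). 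The main technical obstacle, minor in this case, is ensuring that every unbounded manipulation stays inside $\D$; this is automatic here because $Q_L\Hil\subseteq\D$ and $S^{\pm 1}$ commute with $Q_L$, so the whole argument reduces to norm bookkeeping on the finite-energy subspace $\Hil_L$.
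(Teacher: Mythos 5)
Your proof is correct and follows essentially the same route as the paper: the key step is the factorization $\|T_S(X_L)\|\le\|S^{-1}Q_L\|\,\|Q_LXQ_L\|\,\|Q_LS\|\le s_L^+s_L^-\|X_L\|$, obtained from $[Q_L,S^{\pm1}]=0$ and $Q_L^2=Q_L$, followed by induction on $n$ for (i)--(iii). The only (harmless) slip is in your boundedness argument for $X_L$: since $f(S)X\psi$ need not lie in $Q_L\Hil$, you should instead factor $XQ_L=\left(Xf(S)\right)\left(f(S)^{-1}Q_L\right)$ using the other half of the seminorm; the paper itself dismisses this step by observing that $X_L$ lives on the truncated subspace.
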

\begin{proof}
We begin proving that $X_L\in\A_S^{(\beta)}$. For that it is
necessary to prove first that $X_L\in B(\Hil)\cap\LD$, which is
trivial since $\LD$ is an algebra and $X_L$ acts on a
finite-dimensional Hilbert space. Furthermore we have
$$
\|T_S(X_L)\|=\|S^{-1}Q_LXQ_LS\|=\|S^{-1}Q_LQ_LXQ_LQ_LS\|\leq$$
$$\leq\|S^{-1}Q_L\|\|Q_LXQ_L\| \|Q_LS\|\leq s_L^+s_L^-\|X_L\|\leq \beta
\|X_L\|
$$
so that $X_L\in\A_S^{(\beta)}$.

The proof of (i) follows from the fact that $Q_L$ is idempotent, $Q_L=Q_L^2$, and  that $Q_LS=SQ_L$ and $Q_LS^{-1}=S^{-1}Q_L$.

As for (ii) we just prove here that $T_S(X_L)\in \A_S^{(\beta)}$.
For higher powers the proof goes via induction. Since as we have
shown above $\|T_S(X_L)\|\leq \beta \|X_L\|<\infty$, $T_S(X_L)$ is a
bounded operator. Moreover, it clearly belongs to the algebra $\LD$.
Therefore we just need to check that
$\|T_S\left(T_S(X_L)\right)\|\leq \beta \|T_S(X_L)\|$. For this we have
$$
\|T_S\left(T_S(X_L)\right)\|=\|S^{-1}\left(S^{-1}Q_LXQ_LS\right)QS\|=
\|S^{-1}Q_L\left(S^{-1}Q_LXQ_LS\right)Q_LS\|\leq$$
$$\|S^{-1}Q_L\|\|S^{-1}(Q_LXQ_L)S\| \|Q_LS\|\leq
 s_L^+s_L^-\|T_S(X_L)\|\leq \beta \|T_S(X_L)\|
$$
This same estimate can be used to prove also (iii) above for $n=2$.
For that it is sufficient to recall also that $\|T_S(X_L)\|\leq \beta
\|X_L\|$. For $n\geq 3$
the proof goes again via induction.
\end{proof}

\vspace{2mm}

{\bf Remark:} One of the outputs of this Lemma is that, if we look
for elements of $\A_S^{(\beta)}$ for some fixed $\beta$, it is
enough to fix some value $L_0$ such that $s_{L_0}^+s_{L_0}^-$ is
less or equal to $\beta$. Then $Q_{L_0}XQ_{L_0}$ belongs to
$\A_S^{(\beta)}$ for each possible $X\in\LD$. This means, of course,
that the set $\A_S^{(\beta)}$ is rather rich, since each element of
$\LD$ produces, via suitable projection, an element of this set. Of
course, since both $s_{L}^+$ and $s_{L}^-$ increase with $L$, also
$Q_{L_0-1}XQ_{L_0-1}$ still belongs to $\A_S^{(\beta)}$, and so on.

\vspace{2mm}

We continue our list of results on the chain $\{\A_S^{(\alpha)},
\,\alpha>0\}$ giving the following density result:

\begin{cor}
Let $X\in\LD$. For all $\epsilon>0$ and $\forall\,(f,k)\in\C_0$
there exist $\beta>0$ and $\hat X\in\A_S^{(\beta)}$ such that
$\|X-\hat X\|^{f,k}<\epsilon$.
\end{cor}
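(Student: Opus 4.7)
The natural candidate is $\hat X := X_L = Q_L X Q_L$ for $L$ sufficiently large. By Lemma~\ref{lemma1}, $X_L \in \A_S^{(\beta)}$ with $\beta := s_L^+\, s_L^-$, so the corollary reduces to showing that $\|X - X_L\|^{f,k} \to 0$ as $L \to \infty$, for every $(f,k)\in\C_0$. This is the direct analogue, in the present more general setting, of the $\tau$-density of $\Lc_0^\dagger(\D)$ in $\LD$ proved in Section~II; the crucial point is that, even without any commutativity between $X$ and $S$, functions of $S$ still commute freely among themselves and with the spectral projection $Q_L$.

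I decompose $X - X_L = (\1 - Q_L)\,X + Q_L\,X\,(\1 - Q_L)$ and estimate $\|f(S)(X - X_L)\,S^k\|$ term by term, the twin estimate for $\|S^k(X - X_L)\,f(S)\|$ being symmetric. The whole argument rests on the observation that, since $s_l \nearrow +\infty$, for every integer $n \geq 1$,
\[
\|S^{-n}(\1 - Q_L)\| = \sup_{l>L}\, s_l^{-n} \longrightarrow 0 \quad (L \to \infty).
\]
For the first summand, I move $(\1-Q_L)$ past $f(S)$ and insert $\1 = S^{-n} S^n$ to obtain
\[
f(S)(\1-Q_L)\,X\,S^k = S^{-n}(\1-Q_L)\cdot\tilde f(S)\,X\,S^k,
\]
where $\tilde f(x) := x^n f(x)$ still lies in $\C$ (positive, continuous, bounded, and decaying faster than every inverse power of $x$, since $f$ does). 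This is dominated by $\|S^{-n}(\1-Q_L)\|\cdot\|X\|^{\tilde f,k}$. For the second summand, I use $[Q_L,S]=0$ to rewrite it as $f(S)\,Q_L\,X\,S^{k+n}\cdot S^{-n}(\1-Q_L)$, which is dominated by $\|X\|^{f,k+n}\cdot\|S^{-n}(\1-Q_L)\|$. Since $\|X\|^{\tilde f,k}$ and $\|X\|^{f,k+n}$ are both finite by virtue of $X\in\LD$, the upper bound vanishes in $L$; pick $L$ so large that $\|X-X_L\|^{f,k}<\epsilon$ and set $\beta := s_L^+\,s_L^-$.

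The only real subtlety is bookkeeping: one must insert $\1 = S^n\,S^{-n}$ on the correct side of $X$ so that the growing power of $S$ is absorbed into a legitimate seminorm of $X$ (automatically finite since $X\in\LD$), while the small residual factor $S^{-n}(\1-Q_L)$ can be peeled off to furnish the decay in $L$. No further hypothesis on $X$, on $f$, or on $\{s_l\}$ beyond $s_l\to\infty$ (already part of the setting) is needed, and the choice $n=1$ is already sufficient.
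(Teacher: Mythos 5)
Your proof is correct and follows essentially the same route as the paper: the candidate is $\hat X=Q_{L}XQ_{L}$ for $L$ large, membership in $\A_S^{(\beta)}$ with $\beta=s_L^+s_L^-$ comes from Lemma \ref{lemma1}, and the seminorm convergence uses exactly the decomposition $X-X_L=(\1-Q_L)X+Q_LX(\1-Q_L)$ that underlies the density computation of Section II. The one genuine (small) improvement is that you estimate $\|S^{-n}(\1-Q_L)\|$ by $\sup_{l>L}s_l^{-n}$, which tends to zero as soon as $s_l\to\infty$, whereas the paper bounds it by $\bigl(\sum_{l>L}s_l^{-2n}\bigr)^{1/2}$ and therefore carries the superfluous hypothesis that $\{s_l^{-n}\}\in l^2(\Bbb N)$ for some $n>0$; your version shows the corollary holds unconditionally within the stated setting.
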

\begin{proof}
We begin recalling that
$\|X-Q_LXQ_L\|^{f,k}\rightarrow0$ when $L\rightarrow\infty$, at
least if $\{s_l^{-n}\}$ belongs to  $l^2(\Bbb{N})$ for some $n> 0$. This means
that $\forall\,\epsilon>0$ and $\forall \,(f,k)\in\C_0$, there
exists $L_0>0$ such that, $\forall\,L\geq L_0$,
$\|X-Q_LXQ_L\|^{f,k}<\epsilon$. In particular, therefore, we have
$\|X-Q_{L_0}XQ_{L_0}\|^{f,k}<\epsilon$. But, for what we have shown
in the previous Lemma, $X_{L_0}:=Q_{L_0}XQ_{L_0}$ belongs to all
$\A_S^{(\beta)}$ with $\beta\geq s_{L_0}^+s_{L_0}^-$, so that if we
take $\hat X\equiv X_{L_0}$ our claim is proved.
\end{proof}

\vspace{2mm}

{\bf Remarks:} (1) This proposition suggests an approximation
procedure since any element of $\LD$ can be approximated, as much as
we like, with an element in some suitable $\A_S^{(\beta)}$.

(2) Of course, if instead of choosing $\hat X\equiv X_{L_0}$ we take
$\hat X\equiv X_{L_0+1}$ or yet $\hat X\equiv X_{L_0+2}$ (and so
on), we still obtain $\|X-\hat X\|^{f,k}<\epsilon$. However, in general, $X_{L_0+1}$  belongs to some $\A_S^{(\gamma)}$ with $\gamma>\beta$ and not necessarily to $\A_S^{(\beta)}$ itself.

\vspace{2mm}

What is an open problem at the present time is whether, at least for
some $\alpha$, $\A_S^{(\alpha)}$ coincides with all of
$B(\Hil)\cap\LD$ or not. In fact, there exist several results in
the literature which do not allow to conclude if this is true or
not, see \cite{kad,lass,bt2}. In our opinion no final argument exists
confirming or negating this fact. This  is at the basis of
the following Lemma, where the equality of the sets
$\A_S^{(\alpha)}$ and $B(\Hil)\cap\LD$ is simply assumed.

\begin{lemma}\label{lemma2}
Let us suppose that there exists $\alpha>0$ such that
$\A_S^{(\alpha)}=B(\Hil)\cap\LD$. Then $T_S:
\A_S^{(\alpha)}\rightarrow \A_S^{(\alpha)}$.
\end{lemma}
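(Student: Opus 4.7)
The plan is to observe that, under the stated hypothesis, the lemma is essentially automatic: once we know $T_S(X)$ sits in $B(\Hil)\cap\LD$, the equality $\A_S^{(\alpha)}=B(\Hil)\cap\LD$ places it in $\A_S^{(\alpha)}$. So the whole task reduces to checking these two membership properties for $T_S(X)=S^{-1}XS$ when $X\in\A_S^{(\alpha)}$.

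First I would verify boundedness. This is free from the very definition of $\A_S^{(\alpha)}$: since $X\in\A_S^{(\alpha)}$ we have $\|T_S(X)\|\leq \alpha\|X\|<\infty$, so $T_S(X)\in B(\Hil)$.

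Next I would verify $T_S(X)\in\LD$. Because $\LD$ is a *-algebra, it suffices to show that $S$, $S^{-1}$, and $X$ each lie in (or act as elements of) $\LD$ in a way that lets the product be formed inside $\LD$. The operator $X$ is in $\LD$ by hypothesis, and $S$ maps $\D=D^\infty(S)$ into itself by the very definition of $\D$. For $S^{-1}$, which we have assumed bounded throughout the paper, the spectral theorem implies that $S^{-1}$ commutes with every spectral projection of $S$ and hence preserves $D(S^k)$ for each $k$; thus $S^{-1}:\D\to\D$, and being self-adjoint it satisfies the adjoint condition as well, so $S^{-1}\in\LD$. Then $T_S(X)=S^{-1}XS\in\LD$ since $\LD$ is closed under products.

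Combining the two steps, $T_S(X)\in B(\Hil)\cap\LD$, which by hypothesis equals $\A_S^{(\alpha)}$, and we conclude $T_S(X)\in\A_S^{(\alpha)}$.

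The only subtlety — hardly an obstacle — is the argument that $S^{-1}$ preserves $\D$; this is the one point where the proof really uses structural information about $S$ (boundedness of the inverse plus self-adjointness), rather than just the closure hypothesis on $\A_S^{(\alpha)}$. Everything else is a direct unwinding of definitions.
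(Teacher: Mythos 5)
Your proof is correct and follows essentially the same route as the paper's: boundedness of $T_S(X)$ comes directly from the defining inequality $\|T_S(X)\|\leq\alpha\|X\|$, membership in $\LD$ comes from writing $T_S(X)=S^{-1}XS$ as a product of elements of $\LD$, and the hypothesis $\A_S^{(\alpha)}=B(\Hil)\cap\LD$ then closes the argument. The paper states all of this in one line without justifying that $S^{-1}$ preserves $\D$; your explicit check of that point is a welcome (if minor) addition rather than a departure.
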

\begin{proof}
Let $X\in\A_S^{(\alpha)}=B(\Hil)\cap\LD$. Then, because of the
definition of $\A_S^{(\alpha)}$, $T_S(X)$ also belongs to
$B(\Hil)\cap\LD$ and therefore to $\A_S^{(\alpha)}$.
\end{proof}

\vspace{2mm}

{\bf Remark:} It would be interesting to prove, if possible, the
converse implication: is it true that if $T_S:
\A_S^{(\alpha)}\rightarrow \A_S^{(\alpha)}$ for some $\alpha$ then
$\A_S^{(\alpha)}=B(\Hil)\cap\LD$? Of course, it would be enough
to prove that for such an $\alpha$ the inclusion
$B(\Hil)\cap\LD\subseteq\A_S^{(\alpha)}$ holds true, since the
opposite inclusion is evident.

\vspace{2mm}

Lemmas \ref{lemma1} and \ref{lemma2} show that, under certain
conditions,  $T_S$ maps some $\A_S^{(\alpha)}$ (or some subset of it) into itself. The
following example shows that, more generally, there exist different
conditions under which $T_S$ maps some $\A_S^{(\alpha)}$ into a $\A_S^{(\beta)}$, in
general different. All these results motivate Proposition \ref{prop7} below.

\vspace{2mm}

{\bf Example 2:} Let us consider those elements $X\in
B(\Hil)\cap\LD$ satisfying, as in Example 1, the following
commutation relation: $[X,S]=BX$, with $B\in B(\Hil)\cap\LD$. As we
have seen, this implies that $X\in \A_S^{(\alpha)}$ with
$\alpha=\|\1+S^{-1}B\|$, so that $S^{-1}BS\in B(\Hil)\cap\LD$. We
also assume here that $S^{-2}BS^2\in B(\Hil)\cap\LD$. Under this
additional condition it is now easily checked that
$T_S(X)\in\A_S^{(\beta)}$, with $\beta=\|\1+S^{-2}BS^2\|$. Indeed we
can deduce that $T_S(X)\in B(\Hil)\cap\LD$ and, moreover, that
$\|T_S(T_S(X))\|\leq \beta \|T_S(X)\|$. Of course, it also immediately follows that $\|T_S(T_S(X))\|\leq \alpha\,\beta \|X\|$.

\vspace{2mm}

Because of the above example and using our previous considerations we now  prove the following result:
\begin{lemma}
Suppose that for all $\alpha>0$ there exists $\beta>0$ such that
$T_S:\A_S^{(\alpha)}\rightarrow\A_S^{(\beta)}$. Then,
$\forall\,X\in\A_S^{(\alpha)}$ and $\forall\,n\geq 1$, there exists
$\theta>0$ such that $\|T_S^n(X)\|\leq \theta \|X\|$.
\end{lemma}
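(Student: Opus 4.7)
The plan is to iterate the hypothesis to build a finite sequence of constants $\alpha_0,\alpha_1,\ldots,\alpha_{n-1}$ so that each application of $T_S$ moves us to the next link in the chain $\{\A_S^{(\alpha)}\}$, then multiply the norm bounds coming from the definition of each $\A_S^{(\alpha_k)}$.

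First, given $X\in\A_S^{(\alpha)}$, set $\alpha_0:=\alpha$. By the assumption, there exists $\alpha_1>0$ such that $T_S:\A_S^{(\alpha_0)}\to\A_S^{(\alpha_1)}$. Applying the assumption again to $\alpha_1$, there exists $\alpha_2>0$ with $T_S:\A_S^{(\alpha_1)}\to\A_S^{(\alpha_2)}$, and so on. In this way I would produce, for each fixed $n\geq 1$, a finite sequence $\alpha_0,\alpha_1,\ldots,\alpha_{n-1}$ of positive reals with $T_S(\A_S^{(\alpha_k)})\subseteq \A_S^{(\alpha_{k+1})}$ for $k=0,1,\ldots,n-2$.

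Next, I would prove by induction on $k$ that $T_S^k(X)\in\A_S^{(\alpha_k)}$ and
\[
\|T_S^k(X)\|\;\leq\;\alpha_{k-1}\,\alpha_{k-2}\cdots\alpha_0\,\|X\|,\qquad k=1,\ldots,n.
\]
The base case $k=1$ is immediate since $X\in\A_S^{(\alpha_0)}$ gives $\|T_S(X)\|\leq \alpha_0\|X\|$, and the step $T_S:\A_S^{(\alpha_0)}\to\A_S^{(\alpha_1)}$ places $T_S(X)$ in $\A_S^{(\alpha_1)}$. For the inductive step, assuming $T_S^k(X)\in\A_S^{(\alpha_k)}$, the defining inequality for $\A_S^{(\alpha_k)}$ applied to $Y:=T_S^k(X)$ yields $\|T_S(Y)\|\leq \alpha_k\|Y\|$, and combining with the inductive norm bound produces the desired estimate for $T_S^{k+1}(X)$; the hypothesis of the lemma then ensures $T_S^{k+1}(X)\in\A_S^{(\alpha_{k+1})}$.

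Taking $k=n$ gives $\|T_S^n(X)\|\leq \theta\|X\|$ with $\theta:=\prod_{j=0}^{n-1}\alpha_j$, which is a well-defined positive real since the product is finite. This proves the claim. I do not foresee a genuine obstacle: the entire argument is a bookkeeping induction, and the only subtle point is to notice that the constants $\alpha_k$ (and hence $\theta$) depend on $n$ and on the starting index $\alpha$, but not on $X$, which is exactly what the statement requires.
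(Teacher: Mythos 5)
Your proof is correct and follows essentially the same route as the paper's: both arguments induct on the number of applications of $T_S$, using the hypothesis to move along the chain $\A_S^{(\alpha_0)}\rightarrow\A_S^{(\alpha_1)}\rightarrow\cdots$ and obtaining $\theta$ as the product of the successive constants. If anything, your version is written out more explicitly (the paper peels off the first application of $T_S$ rather than the last, and is terser about the uniformity of $\theta$ in $X$), but the underlying idea is identical.
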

\begin{proof}
We use induction on $n$ to  prove our claim.

For $n=1$ it is enough to choose $\theta=\alpha$, since
$X\in\A_S^{(\alpha)}$.

Let us now suppose that our claim is true for a given $n$. We need
to prove that the same statement holds for $n+1$. Indeed, since by
assumption $T_S(X)\in\A_S^{(\beta)}$ for $X\in\A_S^{(\alpha)}$, we
have $\|T_S^{n+1}(X))\|=\|T_S^n(T_S(X))\|\leq \theta\|T_S(X)\|\leq
\theta\alpha\|X\|=:\tilde\theta\|X\|$.
\end{proof}

The conclusion of this Lemma can also be restated by saying that, under
the same conditions stated above, the following inequality is
satisfied: \be \|S^{-n}XS^n\|\leq\theta\|X\|,\label{44}\en for each
$X\in\A_S^{(\alpha)}$. We refer to \cite{lass} and \cite{bt1} for a concrete realization of this inequality in the context of spin systems.

Using inequality (\ref{44}) we can prove the following result

\begin{prop}\label{prop7}
Suppose that for all $\alpha>0$ there exists $\beta>0$ such that
$T_S:\A_S^{(\alpha)}\rightarrow\A_S^{(\beta)}$. Let us further
assume that, given a sequence $\{X_j\}\subset\LD$, there exists
$n>0$ such that: (i) $S^{-n}X_j\in\A_S^{(\alpha)}$ for all $j$ and
some $\alpha>0$; (ii) $\|S^{-n}X_j\|\rightarrow0$ when
$j\rightarrow\infty$.

Then $\|S^{-n-k}X_jS^k\|\rightarrow0$ when $j\rightarrow\infty$ for
all $k\geq0$.
\end{prop}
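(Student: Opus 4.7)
The plan is to recognize that $S^{-n-k}X_jS^k$ is nothing but an iterated application of the map $T_S$ to the element $S^{-n}X_j$, so that the previous Lemma supplies a uniform bound that transports the convergence in hypothesis (ii) to the desired statement.

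More precisely, I would begin by writing $Y_j := S^{-n}X_j$ and observing the identity
$$T_S^k(Y_j) = S^{-k}\,Y_j\,S^k = S^{-n-k}\,X_j\,S^k, \qquad k\geq 0,$$
which follows by induction from the definition $T_S(A)=S^{-1}AS$. For $k=0$ there is nothing to prove, since the conclusion is exactly hypothesis (ii); so I can restrict to $k\geq 1$.

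Next, I would invoke hypothesis (i): $Y_j\in\A_S^{(\alpha)}$ for every $j$, with the \emph{same} $\alpha$. The standing assumption that $T_S$ sends each $\A_S^{(\alpha)}$ into some $\A_S^{(\beta)}$ puts me in the setting of the Lemma immediately preceding this proposition, which yields a constant $\theta=\theta(\alpha,k)>0$, independent of $j$, such that
$$\|T_S^k(Y_j)\|\leq \theta\,\|Y_j\|.$$
Combining this with the displayed identity gives
$$\|S^{-n-k}X_jS^k\|=\|T_S^k(Y_j)\|\leq \theta\,\|S^{-n}X_j\|,$$
and hypothesis (ii) forces the right-hand side to zero as $j\to\infty$.

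There is no real obstacle here beyond spotting the factorisation $S^{-n-k}X_jS^k=T_S^k(S^{-n}X_j)$; the only point to be careful about is that the constant $\theta$ supplied by the previous Lemma depends on $\alpha$ and $k$ but \emph{not} on $j$, which is what allows the bound $\|Y_j\|\to 0$ to propagate to all $k\geq 0$ uniformly in $j$. This is guaranteed because the membership $Y_j\in\A_S^{(\alpha)}$ holds with one and the same $\alpha$ for every $j$.
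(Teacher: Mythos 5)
Your proof is correct and follows exactly the route the paper intends: the paper omits the proof but introduces inequality (\ref{44}) precisely so that one writes $S^{-n-k}X_jS^k=T_S^k(S^{-n}X_j)$ and bounds it by $\theta\|S^{-n}X_j\|$ with $\theta$ depending only on $\alpha$ and $k$. Your remark that the uniformity of $\theta$ in $j$ rests on hypothesis (i) holding with a single $\alpha$ is exactly the right point to flag.
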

We leave the easy proof to the reader. Here we apply this result to
the analysis of the algebraic dynamics as discussed before. In
particular we are now in a position of giving conditions which imply the main hypothesis of Proposition
\ref{prop22}.

\begin{cor}\label{cor1}
Suppose that for all $\alpha>0$ there exists $\beta>0$ such that
$T_S:\A_S^{(\alpha)}\rightarrow\A_S^{(\beta)}$. Let us further
assume that there exists $n>0$ such that: (i)
$S^{-n}\left(e^{iH_Lt}-e^{iH_Mt}\right)\in\A_S^{(\alpha)}$ for all
$L, M$ and some $\alpha>0$; (ii)
$\|S^{-n}\left(e^{iH_Lt}-e^{iH_Mt}\right)\|\rightarrow0$ when
$L,M\rightarrow\infty$.

Then $\|S^{-n-k}\left(e^{iH_Lt}-e^{iH_Mt}\right)S^k\|\rightarrow0$
when $L,M\rightarrow\infty$ for all $k\geq0$.
\end{cor}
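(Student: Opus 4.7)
The plan is to observe that Corollary \ref{cor1} is essentially a direct specialization of Proposition \ref{prop7}. I would view the two-parameter family $X_{L,M}:=e^{iH_Lt}-e^{iH_Mt}$ as a (generalized) sequence in $\LD$, indexed by pairs $(L,M)$, where the limit $L,M\to\infty$ is interpreted in the usual way (e.g.\ $\min(L,M)\to\infty$). The proof of Proposition \ref{prop7} transports without change to such double-indexed families, since all of its estimates are pointwise in the index.

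With this identification, the three hypotheses needed to invoke Proposition \ref{prop7} line up exactly with those of the corollary: the chain-mapping property of $T_S$ is identical in both statements, while (i) and (ii) of the corollary are nothing but (i) and (ii) of Proposition \ref{prop7} read off for the specific choice $X_{L,M}=e^{iH_Lt}-e^{iH_Mt}$. The conclusion $\|S^{-n-k}X_{L,M}S^k\|\to 0$ as $L,M\to\infty$ is then exactly what the corollary asserts.

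To make the mechanism transparent, I would also spell out the one-line computation behind this application. Setting $Y_{L,M}:=S^{-n}(e^{iH_Lt}-e^{iH_Mt})$, hypothesis (i) says $Y_{L,M}\in\A_S^{(\alpha)}$ for all $L,M$ with the \emph{same} $\alpha$. The preceding Lemma (equivalently, inequality (\ref{44})) then supplies a constant $\theta=\theta(\alpha,k)$, independent of $L,M$, with $\|T_S^k(Y_{L,M})\|\leq \theta\,\|Y_{L,M}\|$; noting that $T_S^k(Y_{L,M})=S^{-k-n}(e^{iH_Lt}-e^{iH_Mt})S^k$ gives
$$
\|S^{-n-k}(e^{iH_Lt}-e^{iH_Mt})S^k\|\leq \theta\,\|S^{-n}(e^{iH_Lt}-e^{iH_Mt})\|,
$$
and the right-hand side tends to zero by hypothesis (ii).

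The only point that requires a moment's care is that the bounding constant $\theta$ from the Lemma is uniform in $L,M$; this is clear because $\theta$ depends only on $k$ and on the parameter $\alpha$ of the chain element $\A_S^{(\alpha)}$, which is common to all $Y_{L,M}$. No other subtlety arises, so the proof is essentially a bookkeeping exercise assembling Proposition \ref{prop7} and (\ref{44}).
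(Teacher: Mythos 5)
Your proposal is correct and is exactly the argument the paper intends: the corollary is stated as an immediate specialization of Proposition \ref{prop7} (whose proof is itself left to the reader), obtained by taking $X_j=e^{iH_Lt}-e^{iH_Mt}$ indexed by the pair $(L,M)$, with the uniform bound $\|T_S^k(Y_{L,M})\|\leq\theta\|Y_{L,M}\|$ from inequality (\ref{44}) doing all the work. Your explicit remark that $\theta$ depends only on $k$ and the common $\alpha$, hence is uniform in $L,M$, is the one point worth spelling out, and you have it right.
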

This means that $S^{-n}$ has a {\em regularizing} effect: it is not
important whether the sequence $\left\{e^{iH_Lt}\right\}$ is Cauchy
in the uniform topology or not. What is important is that for some
$n>0$, and therefore for all $m\geq n$, the sequence
$\left\{S^{-m}e^{iH_Lt}\right\}$ is $\|.\|$-Cauchy.

\vspace{2mm}

We end this section by looking at conditions which implies that
$I_{L,M}^{(n)}:=\|S^{-n}\left(e^{iH_Lt}-e^{iH_Mt}\right)\|\rightarrow0$
when $L,M\rightarrow\infty$. For that we notice that
$g_{L,M}(t):=e^{iH_Lt}-e^{iH_Mt}$ can be written as \be
g_{L,M}(t)=i\int_0^t\,e^{iH_L(t-t_1)}\,(H_L-H_M)\,e^{iH_Mt_1}\,dt_1
\label{45}\en Therefore $I_{L,M}^{(n)}\leq
\int_0^t\left\|S^{-n}e^{iH_L(t-t_1)}\,(H_L-H_M)\right\|\,dt_1$, so
that the following Proposition is straightforwardly proved:
\begin{prop}\label{prop23}
Let $n$ be defined as in Corollary \ref{cor1}. Then
$I_{L,M}^{(n)}\rightarrow0$ for $L,M\rightarrow\infty$ in each of
the following conditions:

(1) \be \|S^{-n}(H_L-H_M)\|\rightarrow0\label{46}\en for
$L,M\rightarrow\infty$ and $[H_L,S^{-1}]=0$ for all $L$.

(2) Condition (\ref{46}) holds, there exists $\alpha>0$ such that
$e^{iH_Lt}\in\A_S^{(\alpha)}$ for all $L$ and
$T_S:\A_S^{(\alpha)}\rightarrow\A_S^{(\beta)}$, for some $\beta>0$.

(3) Condition (\ref{46}) holds and
$$
\sum_{k=1}^\infty\,\frac{\tau^k}{k!}\,\|S^{-n}[H_L,H_M]_k\|\rightarrow0
$$
for $L,M\rightarrow\infty$ and $\tau\geq0$.
\end{prop}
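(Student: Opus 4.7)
The plan is to reduce all three parts to a single uniform estimate on the integrand
\[ J_{L,M}(s):=\|S^{-n} e^{iH_L s}(H_L-H_M)\|. \]
Indeed, the bound displayed just before the statement already gives $I_{L,M}^{(n)}\le\int_0^{|t|}J_{L,M}(t-t_1)\,dt_1$ (the factor $e^{iH_M t_1}$ in (\ref{45}) is unitary and may be dropped), so it suffices in each case to produce a majorant of $J_{L,M}(s)$, uniform in $s\in[0,|t|]$, that vanishes as $L,M\to\infty$; integrating in $s$ then concludes.

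For \textbf{(1)} I would note that $[H_L,S^{-1}]=0$ propagates to $[H_L,S^{-n}]=0$, so $S^{-n}$ commutes through $e^{iH_L s}$; combined with the unitarity of $e^{iH_L s}$ this yields the clean bound $J_{L,M}(s)\leq\|S^{-n}(H_L-H_M)\|$, which tends to $0$ by (\ref{46}). For \textbf{(2)} I would rewrite $S^{-n}e^{iH_L s}=T_S^{n}(e^{iH_L s})\,S^{-n}$ and use the assumption $e^{iH_L s}\in\A_S^{(\alpha)}$ together with $T_S:\A_S^{(\alpha)}\to\A_S^{(\beta)}$; the Lemma preceding Proposition \ref{prop7} then supplies a constant $\theta>0$, independent of $L$ and $s$, such that $\|T_S^{n}(e^{iH_L s})\|\leq \theta\|e^{iH_L s}\|=\theta$. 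Hence $J_{L,M}(s)\leq\theta\|S^{-n}(H_L-H_M)\|\to 0$ by (\ref{46}).

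For \textbf{(3)} I would insert $e^{-iH_L s}e^{iH_L s}=\1$ on the right to write
\[ e^{iH_L s}(H_L-H_M)=-\bigl(e^{iH_L s}H_M e^{-iH_L s}-H_L\bigr)\,e^{iH_L s}, \]
and then invoke the Hadamard-type expansion
\[ e^{iH_L s}H_M e^{-iH_L s}=\sum_{k\geq 0}\frac{(is)^k}{k!}[H_L,H_M]_k, \]
which peels off the leading term $H_M-H_L$ from the commutator tail. Multiplying by $S^{-n}$, using unitarity of $e^{iH_L s}$ on the right, and applying the triangle inequality to the series gives
\[ J_{L,M}(s)\leq\|S^{-n}(H_L-H_M)\|+\sum_{k\geq 1}\frac{|s|^k}{k!}\,\|S^{-n}[H_L,H_M]_k\|. \]
Choosing any $\tau\geq |t|$, both summands are controlled by the two hypotheses of (3), so the right-hand side vanishes as $L,M\to\infty$; integrating over $s\in[0,|t|]$ finishes the argument.

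The main obstacle, and really the only non-mechanical point, is justifying the Hadamard expansion in part (3): the $H_L$ in Section III are not required to be bounded (only $e^{iH_L t}\in\LD$ is assumed), so strong convergence of $\sum_k (is)^k[H_L,H_M]_k/k!$ is not automatic. The cleanest way to dispose of this is to observe that the very hypothesis of (3) provides $S^{-n}$-weighted termwise bounds whose sum converges, so that the series is meaningful after left-multiplication by $S^{-n}$ and the termwise estimate above is legitimate; modulo this point, the estimate is the ``straightforward'' argument the author alludes to just before stating the proposition.
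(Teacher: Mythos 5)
Your proposal is correct and follows exactly the route the paper sets up: starting from the integral representation (\ref{45}), you bound the integrand $\|S^{-n}e^{iH_L(t-t_1)}(H_L-H_M)\|$ uniformly in $t_1$ in each of the three cases (via commutativity in (1), via the identity $S^{-n}e^{iH_Ls}=T_S^n(e^{iH_Ls})S^{-n}$ and the norm bound $\|T_S^n(\cdot)\|\leq\theta\|\cdot\|$ in (2), and via the Hadamard expansion in (3)), which is precisely the ``straightforward'' argument the author leaves to the reader. Your caveat about the convergence of the iterated-commutator series in (3) is a fair observation, and your resolution --- that the hypothesis of (3) itself supplies the $S^{-n}$-weighted summability needed to make the termwise estimate legitimate --- is the natural one.
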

The proof of these statements is straightforward and is left to the
reader.

Summarizing, we have proved that condition (\ref{41}) is a
sufficient condition for the algebraic dynamics $\alpha^t$ to exist
in all of $\LD$. Moreover, sufficient conditions for (\ref{41}) to
be satisfied are discussed in Corollary \ref{cor1}. These conditions
are finally complemented by the results discussed in Proposition
\ref{prop23}. Also in this more general setting the same remark
following Proposition 1 can be restated: it may be convenient, in
some applications involving localized operators (e.g. for spin systems), to consider
a different approach and looking directly for the existence of the limit of
$\alpha_L^t(X)$, $X\in\LD$, instead of the existence of the limit of
$e^{iH_Lt}$.

Of course, due to the generality of the physical system we wish to
analyze, the conditions for the existence of $\alpha^t$ looks a bit
complicated. We refer to a paper in preparation for physical
applications of our strategy. Here we just consider few
applications, which are discussed in the next section.

\section{Physical applications}

In this section we discuss some examples of how our previous results
can be applied to physical systems. In particular, the first two
examples are related to what we have done in Section II, while the
last part of this section concerns the construction in Section III. Before starting however, we should say that the applications discussed in this section should be really considered as prototypes of real physical applications, since they are constructed using physical building blocks (bosonic operators) but no explicit expression for $H_L$ is given at all. This makes the following examples more general, from one point of view, but also not immediately related to  concrete physical systems. We will come back on this point in Section V.

\vspace{2mm}

{\bf Example 1:} We begin with a simple example, related to the {\em
canonical commutation relations}. Let $a$, $a^\dagger$ and
$N=a^\dagger\,a$ be the standard annihilation, creation and number
operators, satisfying $[a,a^\dagger]=aa^\dagger-a^\dagger\,a=\1$. Let $\varphi_0$ be the
vacuum of $a$: $a\varphi_0=0$, and
$\varphi_n=\frac{{a^\dagger}^n}{\sqrt{n!}}\,\varphi_0$. Then,
introducing the projector operators $P_l$ via the
$P_lf=<\varphi_l,f>\,\varphi_l$, we consider the following
invertible operator ($N$ is not invertible!):
$S=\1+N=\sum_{l=0}^\infty(1+l)P_l$. So, within this example, is the
number operator for a bosonic system which defines the algebraic and
the topological structure. We consider the following {\em
regularized} hamiltonian: $H_L=\sum_{l=0}^L\,h_l\,P_l$, where we
assume that $h_l$ is very rapidly increasing to $+\,\infty$. It is
obvious that we are in the setting of Section II, and in particular
that $[S,H_L]=0$ for all $L<\infty$. It can be deduced easily that,
$$
\left\|S^{-1}\left(e^{iH_Lt}-e^{iH_Mt}\right)\right\|\leq
\sqrt{\sum_{k=M+1}^L\,(1+k)^{-2}}\rightarrow0,
$$
assuming that $L>M$ and sending $L,M$ to infinity. In this case,
therefore, it is enough to choose $n=1$ in Proposition 1 and we see
explicitly that the sequence $\{h_l\}$ play no role in the analysis
of the dynamics: $\alpha^t$ can be defined for each choice of $\{h_h\}$.

\vspace{3mm}

{\bf Example 2:} Let us now discuss another example, related to an
infinitely extended physical system. We consider a lattice $\Lambda$
labeled by positive integers. To each lattice site $p$ is associated
a two-dimensional Hilbert space generated by two ortonormal vectors
$\varphi_p^{(0)}$ and $\varphi_p^{(1)}$. These vectors can be
constructed as follows: let $a_p$, $a_p^\dagger$ and
$N_p=a_p^\dagger\,a_p$ be the  annihilation, creation and number
operators, satisfying $\{a_p,a_p^\dagger\}=a_p,a_p^\dagger+a_p^\dagger a_p=\1_p$.
Now $\varphi_p^{(0)}$ is the vacuum of $a_p$:
$a_p\varphi_p^{(0)}=0$, while
$\varphi_p^{(1)}=a_p^\dagger\,\varphi_0$. Given a sequence
$\{n\}=\{n_1,n_2,n_3,\ldots\}$, $n_p=0,1$ for all $p\in\Bbb{N}$, we
can construct the following infinite tensor product vector:
$\varphi_{\{n\}}=\otimes_{p\in\Bbb{N}}\varphi_p^{(n_p)}$, which
belongs to the space $\Hil_\infty:=\otimes_{n\in\Bbb{N}}\Hil_n$. We
refer to \cite{thibook} for the details of the construction of an
infinite tensor product Hilbert space.

Let us introduce the set $\F:=\{\varphi_{\{n\}}\in\Hil_\infty \mbox{
such that } \sum_pn_p<\infty\}$, and let $\Hil$ be the Hilbert space
generated by all these vectors. Given a bounded operator on
$\Hil_p$, $X_p\in B(\Hil_p)$, we associate a bounded operator $\hat
X_p$ on $\Hil$ as follows:
$$
\hat X_p\varphi_{\{n\}}=\left(\otimes_{q\neq
p}\1_{q}\varphi_q^{(n_q)}\right)\otimes
\left(X_{p}\varphi_p^{(n_p)}\right)
$$
In this way we associate to $N_p$ an operator $\hat N_p$, and then
we construct $\hat N=\sum_p\hat N_p$. Of course, each vector in $\F$
is an eigenstate of $\hat N$: $\hat
N\varphi_{\{n\}}=n\varphi_{\{n\}}$, where $n=\sum_pn_p$. But for the
lowest eigenvalue, $n=0$, all the other eigenvalues are degenerate. For
instance, the eigenvectors associated to the eigenvalue $n=1$ are,
among the others, $\varphi_{\{n\}}^{(1,a)}:=\varphi_1^{(1)}\otimes
\varphi_2^{(0)}\otimes\varphi_3^{(0)}\otimes\cdots$,
$\varphi_{\{n\}}^{(1,b)}:=\varphi_1^{(0)}\otimes
\varphi_2^{(1)}\otimes\varphi_3^{(0)}\otimes\cdots$,
$\varphi_{\{n\}}^{(1,c)}:=\varphi_1^{(0)}\otimes
\varphi_2^{(0)}\otimes\varphi_3^{(1)}\otimes\cdots$ and so on. These
vectors are mutually orthogonal and define orthogonal projectors in
the standard way: $\varphi_{\{n\}}^{(1,a)}\rightarrow \hat
P_{1,a}f=<\varphi_{\{n\}}^{(1,a)},f>\,\varphi_{\{n\}}^{(1,a)}$,
$f\in\Hil$, and so on. Let now introduce $\hat P_n=\sum_{k\in
I_n}\hat P_{n,k}$, where $I_n$ is the set labeling the degeneration
of the eigenvalue $n$. The operator $\hat N$ can be written in terms
of these projectors as $\hat N=\sum_{n=0}^\infty\,n\,\hat P_n$.
Since $\hat N^{-1}$ does not exist, it is necessary to shift $\hat
N$ by adding, for instance, the identity operator. Therefore we put,
as in the previous example, $\hat S=\hat 1+\hat
N=\sum_{n=0}^\infty\,(1+n)\,\hat P_n$. Once again, if the
regularized hamiltonian $H_L$ can be written as
$H_L=\sum_{l=0}^L\,h_l\hat P_l$, no matter how the $h_l$'s look
like,
$\left\|S^{-1}\left(e^{iH_Lt}-e^{iH_Mt}\right)\right\|\rightarrow0$
when $L,M\rightarrow\infty$.

\vspace{3mm}

{\bf Example 3:} This example differs from the previous ones in that the projectors appearing in
the spectral decomposition of $S$ and $H_L$ are different. More explicitly we assume that $S=\sum_{l=0}^\infty s_lP_l$ and $H_L=\sum_{l=0}^L h_l\Pi_l$, with $P_l\neq \Pi_j$ for all $l$ and $j$. Of course, this implies that $[H_L,H_M]=0$ for all $L, M$, while, in general, $[S,H_L]\neq 0$. For simplicity reasons we consider here only the situation in which the relevant quantity to estimate is $S^{-n}\left(e^{iH_Lt}-e^{iH_Mt}\right)$ rather than $S^{-n+k}\left(e^{iH_Lt}-e^{iH_Mt}\right)S^{-k}$. Quite easy estimates allows us to write, for fixed $n$ and $L>M$,
$$
I_{L,M}:=\left\|S^{-n}\left(e^{iH_Lt}-e^{iH_Mt}\right)\right\|\leq
\sum_{l=0}^\infty\sum_{k=M+1}^L s_l^{-n}\,\|P_l\Pi_k\|
$$
It is clear that if $\Pi_k=P_k$ we would have $\|P_l\Pi_k\|=\delta_{l,k}$ and, if $\{s_l^{-n}\}\in l^1(\Bbb{N})$, then $I_{L,M}\rightarrow0$ when $L,M\rightarrow\infty$. This same conclusion can be deduced even under more general assumptions on $\Pi_k$. For instance, if we have $\|P_l\Pi_k\|=\sum_{j=0}^R\,\beta_j^{(l,k)}\delta_{l,k+j}$ for some finite $R$ and for a set of real coefficients $\beta_j^{(l,k)}$, we can  estimate $I_{L,M}$ as follows:
$$
I_{L,M}\leq \sum_{k=M+1}^L\left(s_k^{-n}+s_{k+1}^{-n}+\cdots+s_{k+R}^{-n}\right)
$$
and the right-hand side goes to zero when $L,M\rightarrow\infty$. This condition reproduces explicitly what has been already discussed in Section II.1.

More interesting is the situation in which $R=\infty$ in the previous formula, i.e. when $\|P_l\Pi_k\|=\sum_{j=0}^\infty\,\beta_j^{(l,k)}\delta_{l,k+j}$ for some  set of real coefficients $\beta_j^{(l,k)}$. First we remark that the right-hand side is not an infinite sum: on the contrary, for fixed $k$ and $l$ it collapses in a single contribution. Moreover, since in any case $\|P_l\Pi_k\|\leq \|P_l\|\,\|\Pi_k\|=1$, we deduce that: $0\leq\beta_j^{(k+j,k)}\leq 1$ for all $k,j\geq0$. Under this assumption we get $I_{L,M}\leq \sum_{j=0}^\infty\,\sum_{k=M+1}^L\,s_{k+j}^{-n}\,\beta_j^{(k+j,\,k)}$. It is now easy to find conditions on $\beta_j^{(k+j,\,k)}$ which imply that $I_{L,M}\rightarrow0$ when $L,M\rightarrow\infty$. The first possible condition is the following:
\be
\beta_j^{(l,k)}=\beta_j^{(l-k)}\quad\mbox{and}\quad \beta_j:=\beta_j^{(j)}=\beta_j^{(k+j,\,k)}\in l^1(\Bbb{N})
\label{ex1}\en
In this case, since the sequence $\{s_k^{-n}\}$ belongs to $l^1(\Bbb{N})$, from some $k$ on we have $s_{k+j}^{-n}\leq s_k^{-n}$ for all $j\geq0$. Hence we get
$$
I_{L,M}\leq \sum_{j=0}^\infty\,\sum_{k=M+1}^L\,s_{k}^{-n}\,\beta_j\leq \|\beta\|_1\,\sum_{k=M+1}^L\,s_{k}^{-n}
$$
which goes to zero for $L$ and $M$ diverging.

A different possibility which again implies the same conclusion is the following: suppose that for each fixed $k$ the series $\sum_{j=0}^\infty\left(\beta_j^{(k+j,\,k)}\right)^2$ converges to some $B_k$ such that $\{B_k\}$ belongs to $l^1(\Bbb{N})$. Then again $I_{L,M}\rightarrow0$ when $L,M\rightarrow\infty$. Indeed we have, since $s_{k+j}^{-n}\leq s_j^{-n}$ for all $k\geq0$ and for each fixed $j$, $I_{L,M}\leq \sum_{j=0}^\infty\,s_j^{-n}\,b_j^{(L,M)}$, where $b_j^{(L,M)}:=\sum_{k=M+1}^L\,\beta_j^{(k+j,\,k)}$. Then, using Schwarz inequality, we get $$I_{L,M}\leq \sqrt{\sum_{j=0}^\infty\,s_j^{-2n}}\,\sqrt{\sum_{j=0}^\infty\,\left(b_j^{(L,M)}\right)^2},$$
which shows that $I_{L,M}$ goes to zero if and only if $J_{L,M}:=\sum_{j=0}^\infty\,\left(b_j^{(L,M)}\right)^2$ goes to zero. This is because $\sum_{j=0}^\infty\,s_j^{-2n}$ surely converges, in our hypothesis. Using Schwarz inequality again it is finally easy to check that $J_{L,M}\leq \sum_{k=M+1}^L\,B_k$, which goes to zero because of our assumption on $B_k$.

It is worth stressing that the two different assumptions considered so far are mutually excluding. In particular, it is clear that if (\ref{ex1}) holds, then $\{B_k\}$ cannot belong to $l^1(\Bbb{N})$.

\section{Conclusions}

In this paper we have generalized some of our previous results on
the existence of the algebraic dynamics within the framework of
$O^*$-algebras. In particular we have considered two different
{\em degrees of generalizations}: in the first step the topology, the
$O^*$-algebra and the regularized hamiltonians  all arise from
the same spectral family $\{P_l\}$. We have shown that for each
$X\in\LD$ its time evolution $\alpha^t(X)$ can be defined under very
general assumptions. Secondly we have considered a really different
situation, in which there is no a priori relation between the
topological quasi *-algebra $\LD[\tau]$ and the family of
regularized hamiltonians. In this case the definition of $\alpha^t$
is much harder but can still be achieved, at least on a large subset
of $\LD$.

Some preliminary physical applications have been considered here, while others are
planned in a future paper, where we also hope to consider in detail
the problem of the existence of other global quantities, like the
KMS states, the entropy of the system, and others. We also will slightly modify our point of view, considering directly the thermodynamical limit of $\alpha_L^t(X)=e^{iH_Lt}Xe^{-iH_Lt}$ rather than the limit of $e^{iH_Lt}$. This, we believe, can enlarge the range of applicability of our results since the power expansion of  $\alpha_L^t(X)=X+it[H_L,X]+\cdots$ involves some commutators which, in general, behave much better than $H_L$ itself when the limit for $L\rightarrow\infty$ is taken. In particular, it should be mentioned that (almost) mean-field spin models should be treated with this different technique, see \cite{bt1,bt2}, rather than with the approach proposed in this paper, which is more convenient for those systems discussed in \cite{bag1}.

\bigskip
\noindent {\large\bf Acknowledgement}

This work has been financially supported in part by M.U.R.S.T.,
within the  project {\em Problemi Matematici Non Lineari di
Propagazione e Stabilit\`a nei Modelli del Continuo}, coordinated by
Prof. T. Ruggeri. I also like to thank Prof. Timmermann because a
discussion we had few years ago during a conference in Bedlewo
pushed me to work on this subject. I also thank him for his comments and suggestions. Finally, I also thank Prof. C. Trapani for pointing my attention to \cite{kad}.




\vspace{8mm}


\begin{thebibliography}{99}


\bibitem{aitbook}  J.-P.
Antoine, A. Inoue and  C. Trapani {\it Partial *-algebras and Their
Operator Realizations}, Kluwer, Dordrecht, 2002


\bibitem{bag1} F. Bagarello, {\em  Applications of Topological *-Algebras of Unbounded
Operators}, J. Math. Phys., {\bf 39}, 2730 (1998)


\bibitem{bag2} F. Bagarello, {\it  Fixed Points in Topological
*-Algebras of Unbounded Operators}, Publ. RIMS, Kyoto Univ. 37,
(2001), 397-418.



\bibitem{bt1} F. Bagarello and C. Trapani, {\it 'Almost' Mean Field
Ising Model: an Algebraic Approach}, J.Statistical Phys. 65,
(1991), 469-482.


\bibitem{bt2} F. Bagarello and C. Trapani, {\it A note on the
algebraic approach to the
 "almost" mean field Heisenberg model}, Il Nuovo Cimento B 108,
(1993), 779-784.



\bibitem{bt3} F. Bagarello, C. Trapani, {\it The Heisenberg
Dynamics of Spin Sistems: a Quasi*-Algebras Approach}, J. Math.
Phys. 37, (1996), 4219-4234.

\bibitem{bt4} F. Bagarello, C. Trapani, {\it Algebraic dynamics
in O*-algebras: a perturbative approach }, J. Math. Phys. 43,
(2002), 3280-3292.


\bibitem{bit1} F. Bagarello, A. Inoue, C. Trapani,
{\it Derivations of quasi $*$-algebras},
 Int. Jour.  Math. and Math. Sci., {\bf 21}, 1077-1096 (2004)

\bibitem{bit2} F. Bagarello, A. Inoue, C. Trapani, {\em Exponentiating
derivations of quasi *-algebras: possible approaches and
applications},   Int. Jour. Math. and Math. Sci.,  {\bf 17},
2805-2820 (2005)

\bibitem{bagtralp} F. Bagarello, C. Trapani, $L^p$-{\em spaces as quasi *-algebras},
J. Math. Anal. Appl., {\bf 197}, 810-824 (1996)

\bibitem{bagrev} F. Bagarello, {\em Algebras of unbounded operators and physical
applications: a survey},  Reviews in Math. Phys, {\bf 19}, No. 3,
231-272 (2007)



\bibitem{sewbag} F. Bagarello, G. L. Sewell, {\em New Structures in the Theory of the Laser
Model II: Microscopic Dynamics and a Non-Equilibrim Entropy
Principle}, J. Math. Phys., {\bf 39} (1998), 2730-2747.

\bibitem{bm}F. Bagarello and G. Morchio, {\em Dynamics of Mean-Field Spin
Models from Basic Results in Abstract Differential Equations} J.
Stat. Phys. {\bf 66} (1992), 849-866.





















\bibitem{hk} R. Haag and D. Kastler, {\it An Algebraic Approach to
Quantum Field Theory}, J.Math.Phys. 5, (1964), 848-861.

\bibitem{kad} R.V. Kadison, {\it Similarity of operator algebras},
Acta Mathematica, 141, (1978), 147-163.


\bibitem{lass} G. Lassner, {\it Topological algebras and their
applications in Quantum Statistics},
 Wiss. Z. KMU-Leipzig, Math.-Naturwiss. R. 30, (1981), 572-595.










\bibitem{timm} M. Schr\"oder and W. Timmermann, {\it Invariance of domains and automorphisms in
algebras of unbounded operators}, in Proc. Int. Conf. on Operator
Algebras and Group Representations, Romania (1980), 134-139.




\bibitem{sew1} G.L. Sewell, {\it Quantum Theory of Collective Phenomena}, Oxford University Press,
 Oxford (1989)

\bibitem{sew2} G.L. Sewell, {\it Quantum Mechanics and its Emergent Macrophysics}, Princeton University Press,
  (2002)








\bibitem{thibook} W. Thirring, {\it Quantum Mechanics of Large
Systems}, Springer-Verlag, New York and Wien (1983)



\bibitem{ctrev} C. Trapani,
{\em Quasi *-algebras of operators and their applications},  Rev.
Math. Phys. {\bf 7} (1995), 1303--1332.










\end{thebibliography}
\end{document}